\newtheorem{theorem}{Theorem}
\newtheorem*{theorem*}{Theorem}
\newtheorem{conjecture}[theorem]{Conjecture}
\newtheorem{corollary}[theorem]{Corollary}
\newtheorem{lemma}[theorem]{Lemma}
\newtheorem{observation}{Observation}
\title{Asymptotics of the chromatic number for quasi-line graphs}
\author{Andrew D.\ King\thanks{Corresponding author: {\tt andrew.d.king@gmail.com}, IEOR Department, Columbia University, New York.  Research supported by NSERC doctoral and postdoctoral fellowships and the European project {\sc ist fet Aeolus}, contract number IP-FP6-015964.}\ \ and Bruce Reed\thanks{Research supported in part by a Canada Research Chair.}}
\begin{document}

\maketitle

\begin{abstract}

As proved by Kahn, the chromatic number and fractional chromatic number of a line graph agree asymptotically.  That is, for any line graph $G$ we have $\chi(G) \leq (1+o(1))\chi_f(G)$.  We extend this result to quasi-line graphs, an important subclass of claw-free graphs.  Furthermore we prove that we can construct a colouring that achieves this bound in polynomial time, giving us an asymptotic approximation algorithm for the chromatic number of quasi-line graphs.
\end{abstract}

%\linenumbers

%%%%%%%%%%%%%%%%%%%%%%%%%%%%%%%%%%%%%%%%%%%%%%%%%%%%%%%%%%%%%%%%%%%%%%%%%%%%%%%%
%%%%%%%%%%%%%%%%%%%%%%%%%%%%%%%%%%%%%%%%%%%%%%%%%%%%%%%%%%%%%%%%%%%%%%%%%%%%%%%%
%%%%%%%%%%%%%%%%%%%%%%%%%%%%%%%%%%%%%%%%%%%%%%%%%%%%%%%%%%%%%%%%%%%%%%%%%%%%%%%%
%%%%%%%%%%%%%%%%%%%%%%%%%%%%%%%%%%%%%%%%%%%%%%%%%%%%%%%%%%%%%%%%%%%%%%%%%%%%%%%%
\section{Introduction}

The chromatic number $\chi$ of a graph $G=(V,E)$ is notoriously difficult to compute or even approximate.  Bellare, Goldreich, and Sudan proved that it is NP-hard to approximate $\chi$ to within a factor of $|V|^{1/7}$ \cite{bellaregs98}.  Like any NP-complete problem, we can formulate $\chi$ as the solution to an integer program, and may hope to approximate it by solving the fractional LP relaxation.  In this case, we seek a {\em fractional vertex $c$-colouring} of a graph $G$, i.e.\ a nonnegative weighting $w$ on the stable sets of $G$ such that $\sum_{S}w(S) \leq c$, and for every vertex $v$, $\sum_{S \ni v} w(S) = 1$.  The {\em fractional chromatic number} of $G$, written $\chi_f(G)$, is the smallest $c$ for which $G$ has a fractional vertex $c$-colouring.  In general the approach of approximating $\chi$ by finding $\chi_f$ is impractical for two reasons.  First, $\chi$ is not necessarily bounded above by any function of $\chi_f$ (see for example \cite{kingthesis} \S3.1).  Second, Lov\'asz proved that $\chi \leq \lceil \log(|V|)\chi_f\rceil$ \cite{lovasz75}, so in general it is NP-hard to approximate $\chi_f$ to within a factor of $|V|^{\epsilon} $ for any $\epsilon<1/7$.

In spite of these difficulties, the fractional approach is effective for certain restricted class of graphs.  For instance, for perfect graphs $\chi$ and $\chi_f$ are equal and can be computed efficiently.  In fact, finding an optimal fractional colouring is the key to efficiently $\chi$-colouring a perfect graph \cite{reed01b}; no efficient combinatorial algorithm is known.

Another class for which the fractional approach is fruitful is the class of line graphs (we define them later in this section).  Although computing the chromatic number of a line graph is NP-hard \cite{holyer81}, we can compute the fractional chromatic number of a line graph in polynomial time.  Combining this fact with other results gives us an asymptotic approximation algorithm for the chromatic number of a line graph.  Kahn proved that for line graphs $\chi$ and $\chi_f$ agree asymptotically \cite{kahn96}, and Sanders and Steurer gave an algorithmic result \cite{sanderss05}:

\begin{theorem}\label{thm:ss}
Any line graph $G$ can be coloured in polynomial time using $\chi_f(G) + \sqrt{\frac 92 \chi_f(G)}$ colours.
\end{theorem}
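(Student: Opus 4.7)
The plan is to reduce this to a statement about proper edge-colourings of multigraphs. First I would use the standard polynomial-time line-graph recognition algorithm to write $G = L(H)$ for some multigraph $H$. Under this identification, vertex colourings of $G$ correspond bijectively to proper edge colourings of $H$, so $\chi(G) = \chi'(H)$ and $\chi_f(G) = \chi'_f(H)$. By Edmonds' matching polytope theorem, $\chi'_f(H) = \max(\Delta(H), \Gamma(H))$, where $\Gamma(H) = \max_{|S|\geq 3,\ |S|\ \text{odd}} \tfrac{2|E(H[S])|}{|S|-1}$, and this quantity is computable in polynomial time. Setting $k := \lceil \chi'_f(H) + \sqrt{(9/2)\chi'_f(H)}\rceil$, the task reduces to producing a proper $k$-edge-colouring of $H$ in polynomial time.

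Next I would build the colouring one edge at a time. At a generic step, a partial proper $k$-edge-colouring $\varphi$ is in hand and some edge $e = uv$ still needs a colour. If no colour is simultaneously missing at both $u$ and $v$, extending the colouring requires recolouring, and the right combinatorial object in the multigraph setting is a \emph{Tashkinov tree} rooted at $e$: a sequence of edges $e = e_0, e_1, \ldots, e_t$ in which each $e_i$ is incident to the tree grown so far and is coloured with a colour missing at some earlier vertex of the tree. Standard Tashkinov--Kierstead theory then guarantees that as long as $k > \chi'_f(H)$, either an alternating Kempe chain swap along the tree frees a colour for $e$, or the tree can be grown by at least one more edge.

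The analysis is the technical heart of the argument. If the tree ever stalls and no Kempe swap works, the vertex set $S$ of the tree must carry enough colour-missing data at its vertices to force $|E(H[S])| > \tfrac{|S|-1}{2}\chi'_f(H)$, contradicting the definition of $\Gamma(H)$ and hence the value of $\chi'_f(H)$ supplied by Edmonds. Quantifying exactly how many colours of slack one needs to force termination after $O(\sqrt{\chi'_f(H)})$ layers of tree growth, and bounding the cumulative loss from Kempe swaps that touch repeated colours, yields the additive slack of $\sqrt{(9/2)\chi'_f(H)}$. Since each insertion touches only a polynomial-sized structure and each recolouring step is a single Kempe swap, the overall running time is polynomial.

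The main obstacle is the density analysis in the previous paragraph: one must track, for every vertex of the Tashkinov tree, how many colours are missing versus how many Kempe chains are blocked, and then combine these local bounds into a global lower bound on $|E(H[S])|$ that violates Edmonds' inequality. Tuning the parameters so that the additive error balances to exactly $\sqrt{(9/2)\chi'_f(H)}$ — rather than the weaker $\sqrt{8\chi'_f}$ of Scheide or a constant-factor bound like those derived from Nishizeki--Kashiwagi — is where the specific $9/2$ constant, and the real content of the Sanders--Steurer refinement, emerges.
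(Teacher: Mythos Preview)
The paper does not prove this theorem; it is quoted as a known result of Sanders and Steurer \cite{sanderss05} and invoked as a black box (in particular in the proof of Lemma~\ref{lem:c}). Your proposal is a reasonable high-level outline of what the actual Sanders--Steurer argument looks like --- reduction to edge-colouring of a multigraph $H$ with $G=L(H)$, Edmonds' formula $\chi'_f(H)=\max(\Delta(H),\Gamma(H))$, and iterative extension of a partial edge-colouring via Tashkinov trees backed by a density contradiction --- so in spirit it matches the intended route.

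That said, it remains a sketch rather than a proof: the step you yourself flag as ``the main obstacle'', namely the quantitative density analysis that pins down the constant $\tfrac{9}{2}$, is asserted rather than carried out, and that is exactly where the content of \cite{sanderss05} lies. Without that analysis the argument does not distinguish $\tfrac{9}{2}$ from the weaker constants you mention. For the purposes of this paper a citation suffices; if a self-contained proof is intended, the Tashkinov-tree bookkeeping must be written out in full.
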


Our aim is to extend this approach to give a polynomial-time algorithm for $(1+o(1))\chi$-colouring a broader class of graphs.  While we can fractionally colour any claw-free graph in polynomial time (as we will explain in Section \ref{sec:algorithmic}), the approach will not give us a polytime method for $(1+o(1))\chi$-colouring claw-free graphs since $\chi$ and $\chi_f$ do not necessarily agree asymptotically.  For example, the complement of a collection of $k$ disjoint 5-cycles has fractional chromatic number $\tfrac 52 k$ and chromatic number $3k$ for any $k\geq 0$.

Thus we consider the class of {\em quasi-line} graphs, a subclass of claw-free graphs that includes all line graphs.  A graph is quasi-line if the neighbours of any vertex can be covered by two cliques.  Chudnovsky and Seymour's recent structure theorem for quasi-line graphs \cite{cssurvey} has sparked interest in bounding the chromatic number of these graphs \cite{chudnovskyo07, chudnovskyo08, kingr08}.  Let $t(G)$ denote $\lfloor \chi_f(G)+3\sqrt{\chi_f(G)} \rfloor$.  Here we prove:

\begin{theorem}\label{thm:main}
Any quasi-line graph $G$ satisfies $\chi(G) \leq t(G)$.
\end{theorem}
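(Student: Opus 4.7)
My plan is to invoke Chudnovsky and Seymour's structure theorem for quasi-line graphs, which says that every connected quasi-line graph $G$ with stability number at least three is either a fuzzy circular interval graph or a composition of linear interval strips over an underlying multigraph $M$; disconnected graphs and small cases split off trivially. Line graphs are exactly the compositions in which every strip is a single edge, so the goal is to reduce both structural families back to the line-graph setting with only $O(\sqrt{\chi_f(G)})$ extra colours of slack, so that Theorem \ref{thm:ss} finishes the job within the $3\sqrt{\chi_f(G)}$ budget.

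In the composition case, I would form the line graph $H=L(M)$ and associate to each edge $e\in E(M)$ a block $B_e$ of vertices of $G$ corresponding to the strip at $e$. Projecting stable sets of $G$ onto $E(M)$ shows $\chi_f(H)\le \chi_f(G)$, so Theorem \ref{thm:ss} produces a proper colouring of $H$ using at most $\chi_f(G)+\sqrt{\tfrac{9}{2}\chi_f(G)}$ colours. Each colour class of $H$ is a matching of $M$, and pulls back to a union of blocks $B_e$ in $G$; within each block the induced graph is a fuzzy linear interval graph, and needs to be coloured properly using its own colour plus a small number of extras. The key is to schedule these local colourings globally so that strips sharing a vertex of $M$ can be coloured compatibly, and so that the total extra-colour budget stays within $3\sqrt{\chi_f(G)} - \sqrt{\tfrac{9}{2}\chi_f(G)}$.

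In the fuzzy circular interval case, $G$ sits on a single cycle with interval arcs, and the fractional chromatic number is attained by a circular schedule assigning weights to stable arc-transversals. I would round this schedule to an integer colouring by setting aside a small pool of $\lceil\sqrt{\chi_f(G)}\rceil$ reserve colours to correct the rounding errors, using a Hall-type argument on the arc intersection graph. The main obstacle, in both cases, is that a naive per-strip or per-arc lift costs one extra colour per piece, which can easily exceed the allotted $O(\sqrt{\chi_f(G)})$ slack; the delicate combinatorial work is to show that the many local lifts can share colours, by reshuffling the output of Theorem \ref{thm:ss} (or the circular LP) so that the conflict graph of pieces needing an extra colour has chromatic number $O(\sqrt{\chi_f(G)})$ independent of $|V(G)|$. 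This almost certainly requires using the slack in Sanders–Steurer not just for rounding, but as a reservoir of colour classes that can be repurposed to heal the fuzzy parts all at once.
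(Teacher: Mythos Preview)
Your reduction in the composition case is the wrong line graph. Taking $H=L(M)$ collapses every strip to a single vertex, so a colour class of $H$ pulls back to a disjoint union of full strips, each of which can have clique number as large as $\chi_f(G)$; you are then faced with colouring these unions and gluing them across hubs, which is the original problem with no progress made. The slack-sharing scheme you sketch (``conflict graph of pieces needing an extra colour has chromatic number $O(\sqrt{\chi_f(G)})$'') has no mechanism to work, because the number of extra colours a strip needs is not $O(1)$ but $\Theta(\chi_f(G))$. The paper's line graph is built quite differently: for each strip one first records the \emph{overlap} $w_e$, the total weight of fractional colour classes meeting both end-cliques $X_e$ and $Y_e$, and then replaces the strip by a clique of size $|X_e|+|Y_e|-w_e$ with end-cliques of sizes $|X_e|$ and $|Y_e|$ overlapping in exactly $w_e$ vertices. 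The resulting graph $G'$ is a line graph with $\chi_f(G')\le\chi_f(G)+\tfrac13\sqrt{\omega(G)}$ (the extra $\tfrac13\sqrt{\omega(G)}$ pays for rounding each $w_e$ down to an integer, and this is where the ``pieces can share extra colours'' argument actually lives: nontrivial strips have large end-cliques, so few of them leave any given hub), and a Sanders--Steurer colouring of $G'$ transfers to the hub graph with the prescribed overlaps. The strips are then coloured one at a time to match those overlaps via an auxiliary circular interval graph. None of this is visible from $L(M)$.

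Two smaller points. For circular interval graphs there is nothing to do: Niessen and Kind proved $\chi=\lceil\chi_f\rceil$ for this class, so no rounding argument or reserve pool is needed. And the ``fuzzy'' qualifier (i.e.\ nonlinear homogeneous pairs of cliques) is handled by a separate reduction lemma that passes to a proper subgraph with the same $\chi$ and $\chi_f$; it is not absorbed into the strip or arc arguments.
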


Furthermore our proof leads yields an efficient algorithm:

\begin{theorem}\label{thm:mainalgo}
Any quasi-line graph $G$ can be coloured in polynomial time using $t(G)$ colours.
\end{theorem}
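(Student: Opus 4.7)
The plan is to show that every step of the proof of Theorem~\ref{thm:main} is polynomial-time constructive, so that the existence argument converts directly into an algorithm. Three ingredients need to be made algorithmic: computing (or tightly estimating) $\chi_f(G)$, decomposing $G$ according to the Chudnovsky--Seymour structure theorem for quasi-line graphs, and producing a colouring on each structural piece that obeys the bound $t(G)$.

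For the first ingredient, I would compute $\chi_f(G)$ exactly in polynomial time via the ellipsoid method: the stable set polytope of a claw-free graph admits a polynomial-time separation oracle, which is presumably what Section~\ref{sec:algorithmic} formalises. Fixing $\chi_f(G)$ fixes the target value $t(G)$. For the second ingredient, I would invoke a constructive form of the Chudnovsky--Seymour structure theorem to decompose $G$ into either a (fuzzy) circular interval representation or a composition of linear interval strips attached to a multigraph skeleton $H$. In the composition case, contracting each strip to an edge of $H$ makes a proper colouring of $G$ essentially equivalent to an edge colouring of $H$ together with locally consistent strip-internal colourings. I would then apply Theorem~\ref{thm:ss} to $L(H)$ to obtain a colouring of the skeleton with only $O(\sqrt{\chi_f(L(H))})$ additive error, and extend through each strip with a further small additive loss; the arithmetic is arranged so that the sum of these losses fits inside the $3\sqrt{\chi_f(G)}$ buffer in $t(G)$. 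The (fuzzy) circular interval case can be handled directly by rounding a fractional colouring using the geometric representation on the circle, and the integrality gap there is safely absorbed by the same buffer.

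The main obstacle I expect is coordinating colourings across strips that share a boundary clique in $G$: the local extensions on two adjacent strips must agree on the common vertices, and keeping the accumulated additive error under control is what forces the $3\sqrt{\chi_f(G)}$ slack in $t(G)$ rather than the tighter $\sqrt{9\chi_f(G)/2}$ of Theorem~\ref{thm:ss}. The bulk of the technical work should go into bounding this combined coordination error and into verifying that the decomposition produced by the Chudnovsky--Seymour structure theorem can in fact be obtained in polynomial time; once these are in hand, the rest of the algorithm is essentially book-keeping on top of Theorem~\ref{thm:ss} and the fractional colouring LP.
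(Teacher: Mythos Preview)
Your outline has the right high-level shape but misses the central coordination device, and the one concrete step you do commit to is wrong in a way that breaks the argument.

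The error is in ``contract each strip to an edge of $H$ and apply Theorem~\ref{thm:ss} to $L(H)$.'' The graph $L(H)$ has one vertex per strip, so its chromatic number has nothing to do with $\chi_f(G)$: if $H$ is a single edge whose strip is a clique of size $k$, then $L(H)$ is a point while $G$ needs $k$ colours. What the paper actually builds is the line graph of a \emph{different} multigraph $H'$, obtained by replacing each edge $e=x_ey_e$ of $H$ by $w_e$ parallel $x_ey_e$-edges together with $|X_e|-w_e$ and $|Y_e|-w_e$ pendant edges at $x_e$ and $y_e$; here $w_e$ is the \emph{overlap}, the weight of colour classes meeting both end-cliques $X_e$ and $Y_e$ in a fixed fractional colouring of $G$. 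This $G'=L(H')$ has the same hub-clique sizes as $G_h$, so $\chi_f(G')\le t'(G)$ and Theorem~\ref{thm:ss} yields a $t(G)$-colouring of $G'$ (hence of $G_h$) with prescribed overlap $w_e$ on every strip.

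The overlap $w_e$ is exactly the ``coordination across boundary cliques'' you flag as the obstacle, and it is not handled by accumulating additive error strip by strip. The paper first perturbs the optimal fractional colouring so that every $w_e$ becomes an integer; this costs at most $\tfrac13\sqrt{\omega(G)}$ extra colours, because the number of nontrivial strips leaving any hub is bounded (each nontrivial $X_e$ has size at least $3\sqrt{\chi_f(G)}$ by the minimum-degree condition on robust graphs). With integral $w_e$ one can then (a) colour $G'$ as above, and (b) colour each strip $S_e$ with exactly $w_e$ shared colours by embedding $S_e$ into an auxiliary circular interval graph and using $\chi=\lceil\chi_f\rceil$ there. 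None of this machinery appears in your sketch.

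Two smaller omissions: you need the preliminary reductions (strip low-degree vertices, eliminate nonlinear homogeneous pairs of cliques via Lemmas~\ref{lem:hpocfind}--\ref{lem:hpocreduce}, split on clique cutsets) before the structure theorem applies in its useful ``robust'' form; and in the circular interval case the integrality gap is exactly $\lceil\chi_f\rceil-\chi_f$ by Niessen--Kind, so no buffer is consumed there at all.
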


%Suppose $G$ is a quasi-line graph with the property that we cannot remove an edge of $G$ to reach a smaller quasi-line graph $G'$ with $\chi(G)=\chi(G')$ and $\chi_f(G)=\chi_f(G')$.  

Chudnovsky and Seymour's structure theorem states, roughly speaking, that a quasi-line graph $G$ is a circular interval graph, or can be constructed from a line graph by replacing each vertex with a linear interval graph, or contains a structural fault which points out a proper quasi-line subgraph $G'$ with $\chi(G)=\chi(G')$ and $\chi_f(G)=\chi_f(G')$.  This allows us to prove our results by combining results on line graphs, circular interval graphs, and linear interval graphs.  

In Section \ref{sec:sketch} we sketch our proof of Theorem \ref{thm:main}, dealing with the details in subsequent sections and finally proving Theorem \ref{thm:mainalgo} in Section \ref{sec:algorithmic}.  In the remainder of this section we present Chudnovsky and Seymour's characterization of quasi-line graphs.

%A {\em fractional vertex $c$-colouring} of a graph $G$ can be described as a nonnegative weighting $w$ on the stable sets of $G$ such that $\sum_{S}w(S) \leq c$, and for every vertex $v$, $\sum_{S \ni v} w(S) = 1$.  The {\em fractional chromatic number} of $G$, written $\chi_f(G)$, is the smallest $c$ for which $G$ has a {\em fractional vertex $c$-colouring}.

%%%%%%%%%%%%%%%%%%%%%%%%%%%%%%%%%%%%%%%%%%%%%%%%%%%%%%%%%%%%%%%%%%%%%%%%%%%%%%%%
%%%%%%%%%%%%%%%%%%%%%%%%%%%%%%%%%%%%%%%%%%%%%%%%%%%%%%%%%%%%%%%%%%%%%%%%%%%%%%%%
%%%%%%%%%%%%%%%%%%%%%%%%%%%%%%%%%%%%%%%%%%%%%%%%%%%%%%%%%%%%%%%%%%%%%%%%%%%%%%%%
\subsection{The structural foundation}

In this paper we allow multigraphs to have loops.  Given a multigraph $H$, its {\em line graph} $L(H)$ is the graph with one vertex for each edge of $H$, in which two vertices are adjacent precisely if their corresponding edges in $H$ share at least one endpoint.  We say that $G$ is a line graph if $G=L(H)$ for some multigraph $H$.  Thus the neighbours of any vertex $v$ in a line graph $L(H)$ are covered by two cliques, one for each endpoint of the edge in $H$ corresponding to $v$.

%%%%%%%%%%%%%%%%%%%%%%%%%%%%%%%%%%%%%%%%%%%%%%%%%%%%%%%%%%%%%%%%%%%%%%%%%%%%%%%%
\subsubsection{Linear and circular interval graphs}

A {\em linear interval graph} is a graph $G=(V,E)$ with a {\em linear interval representation}, which is a point on the real line for each vertex and a set of intervals, such that vertices $u$ and $v$ are adjacent in $G$ precisely if there is an interval containing both corresponding points on the real line.  If $X$ and $Y$ are specified cliques in $G$ consisting of the $|X|$ leftmost and $|Y|$ rightmost vertices of $G$ respectively, we say that $X$ and $Y$ are {\em end-cliques} of $G$.

In the same vein, a {\em circular interval representation} of $G$ consists of $|V|$ points on the unit circle and a set of intervals (arcs) on the unit circle such that two vertices of $G$ are adjacent precisely if some arc contains both corresponding points.  The class of {\em circular interval graphs}, i.e.\ those graphs with a circular interval representation, are the first of two fundamental types of quasi-line graph.  Deng, Hell, and Huang proved that we can identify and find a representation of a circular or linear interval graph in linear time \cite{denghh96}.  We now describe the second fundamental type of quasi-line graph.

%%%%%%%%%%%%%%%%%%%%%%%%%%%%%%%%%%%%%%%%%%%%%%%%%%%%%%%%%%%%%%%%%%%%%%%%%%%%%%%%
\subsubsection{Compositions of linear interval strips}

A {\em linear interval strip} $(S,X,Y)$ is a linear interval graph $S$ with specified end-cliques $X$ and $Y$.  To make a composition of linear interval strips, we begin with an underlying directed multigraph $H$ and a corresponding set of disjoint linear interval strips $\{(S_e,X_e,Y_e) \mid e\in E(H)\}$.  For each vertex $v\in V(H)$ we define the {\em hub clique} as
$$C_v = \left( \bigcup\{X_e \mid e \textrm{ is an edge out of } v \}\right) \cup \left( \bigcup \{Y_e \mid e \textrm{ is an edge into } v \}\right).$$
We now construct $G$, the composition of strips $\{(S_e,X_e,Y_e)\mid e\in E(H)\}$ with underlying multigraph $H$, by taking the disjoint union of strips and adding edges to make each $C_v$ a clique.  We say that $G$ is a {\em composition of linear interval strips} (see Figure \ref{fig:strip}).  Let $G_h$ denote the subgraph of $G$ induced on the union of all hub cliques.  That is,
$$G_h = G[\cup_{v\in V(H)} C_v] = G[\cup_{e\in E(H)} (X_e\cup Y_e)].$$

\begin{figure}
\begin{center}
\includegraphics[width=\textwidth]{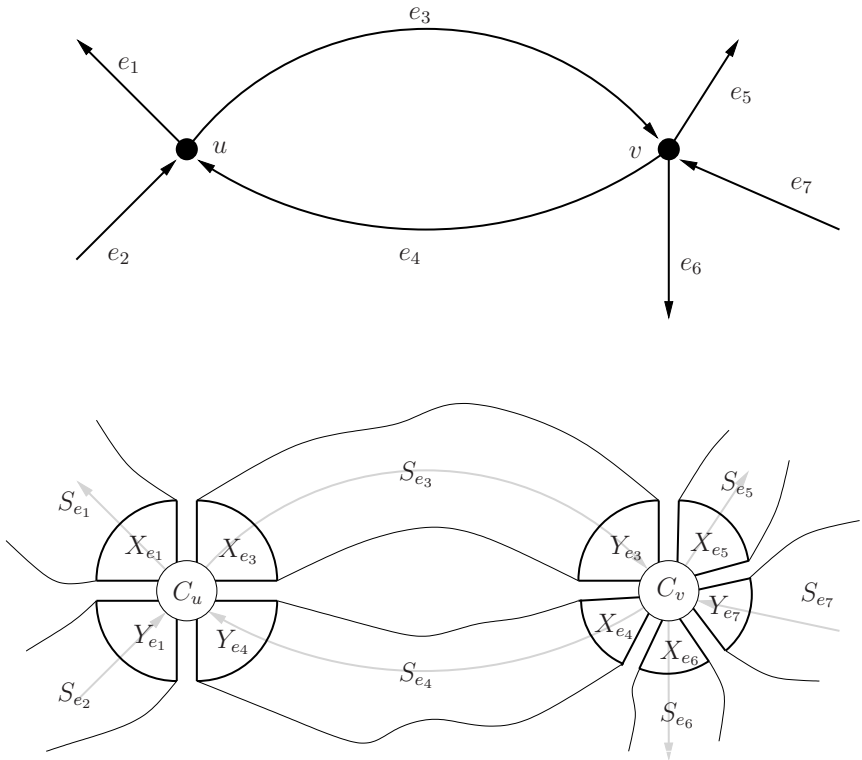}
\end{center}
\caption[A composition of strips]{We compose a set of strips $\{(S_e,X_e,Y_e) \mid e\in E(H)\}$ by joining them together on their end-cliques.  A hub clique $C_u$ will arise for each vertex $u \in V(H)$.}
\label{fig:strip}
\end{figure}

Compositions of linear interval strips generalize line graphs: note that if each $S_e$ satisfies $|S_e|=|X_e|=|Y_e|=1$ then $G = G_h = L(H)$.

%%%%%%%%%%%%%%%%%%%%%%%%%%%%%%%%%%%%%%%%%%%%%%%%%%%%%%%%%%%%%%%%%%%%%%%%%%%%%%%%
\subsubsection{Homogeneous pairs of cliques}

A pair of disjoint nonempty cliques $(A,B)$ is a {\em homogeneous pair of cliques} if $|A\cup B|\geq 3$, and every vertex outside $A\cup B$ sees either all or none of $A$ and either all or none of $B$.  These are a special case of {\em homogeneous pairs}, which were introduced by Chv\'atal and Sbihi in the study of perfect graphs \cite{chvatals87}.  It is not hard to show that for a homogeneous pair of cliques $(A,B)$, $G[A\cup B]$ contains an induced copy of $C_4$ precisely if $A\cup B$ does not induce a linear interval graph; in this case we say that $(A,B)$ is a {\em nonlinear} homogeneous pair of cliques\footnote{These were originally called {\em nontrivial} homogeneous pairs of cliques by Chudnovsky and Seymour, who used them in their description of quasi-line graphs \cite{cssurvey}.  We prefer the more descriptive term {\em nonlinear} in part because {\em linear} homogeneous pairs of cliques are less trivial than {\em skeletal} homogeneous pairs of cliques, which are useful in the study of general claw-free graphs (see \cite{kingthesis}, Chapter 6).}.

%%%%%%%%%%%%%%%%%%%%%%%%%%%%%%%%%%%%%%%%%%%%%%%%%%%%%%%%%%%%%%%%%%%%%%%%%%%%%%%%
\subsubsection{The structure theorem}

Chudnovsky and Seymour's structure theorem for quasi-line graphs \cite{cssurvey} tells us that all quasi-line graphs are made from the building blocks we just described.

\begin{theorem}\label{thm:structure1}
Any connected quasi-line graph containing no nonlinear homogeneous pair of cliques is either a circular interval graph or a composition of linear interval strips.
\end{theorem}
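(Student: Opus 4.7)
The plan is to reduce a general connected quasi-line graph $G$ with no nonlinear homogeneous pair of cliques to one of the two stated building blocks by examining the local constraints imposed by the two-clique covers of each vertex's neighbourhood. For each vertex $v$ fix cliques $A_v, B_v$ covering $N(v)$ and consider the maximal cliques of $G$; the quasi-line property forces each edge to lie in at most two maximal cliques, and the interplay of these maximal cliques gives $G$ a skeletal structure that one can hope to classify by a dichotomy between purely cyclic and separable behaviour.

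I would split the argument based on whether $G$ admits a \emph{hub-type clique}, namely a clique $K$ such that $G\setminus K$ has at least two components whose neighbourhoods in $K$ are themselves cliques. If no such separator exists, then the maximal cliques must arrange themselves cyclically around a single cyclic order of vertices; formalising this (for instance by threading a cyclic order through the clique intersection graph) should yield a circular interval representation. Otherwise, pick a minimal hub-type clique $C$; then $G\setminus C$ decomposes into pieces, each of which together with its end-cliques in $C$ forms a smaller quasi-line graph. By induction, and using that no nonlinear homogeneous pair of cliques can appear, each piece becomes a linear interval strip, and gluing them back at the hub cliques recovers the stated composition.

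The main obstacle is verifying two linearity constraints simultaneously: first, that each strip is a \emph{linear} (not circular) interval graph, and second, that the inductive hypothesis is preserved on each piece after the hub clique is removed. The hypothesis on homogeneous pairs is what makes both work: any induced $C_4$ straddling a strip's two end-cliques, together with that strip's uniform attachment to the rest of $G$ through the hub, would produce a nonlinear homogeneous pair of cliques, contradicting our assumption. Conversely, any new homogeneous pair arising in a piece after removal of $C$ can be traced back to a homogeneous pair already present in $G$, so induction goes through. A clean way to organise the whole argument is by induction on $|V(G)|$, peeling off one strip (or, in the circular case, one maximal arc) at a time and carrying along the invariant that the residual graph remains quasi-line and free of nonlinear homogeneous pairs of cliques.
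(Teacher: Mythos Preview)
The paper does not prove this theorem at all: it is stated as Chudnovsky and Seymour's structure theorem for quasi-line graphs and cited to \cite{cssurvey} (with the refinement to canonical strips coming from \cite{clawfree7} and \cite{clawfree5}). So there is no ``paper's own proof'' to compare against; the result is imported as a black box, and the actual proof in the Chudnovsky--Seymour series is long and delicate.

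As for your proposal on its own merits, it is a plausible-sounding outline but not a proof, and the central dichotomy does not obviously hold. Your ``hub-type clique'' is essentially a clique cutset with extra properties, but a composition of linear interval strips need not contain any clique cutset at all: in the composition each hub clique $C_v$ is a clique, yet removing $C_v$ typically leaves $G$ connected because the strips meeting at other hubs still link everything together. So the case ``no hub-type clique exists'' does not force $G$ to be a circular interval graph, and the case split collapses. The hard content of the Chudnovsky--Seymour theorem is precisely to detect the strip structure \emph{without} a clique-cutset-style separation --- they do this via a careful analysis of $2$-joins and ``stripes'' rather than vertex cutsets. Your inductive step also has a gap: even when a suitable separating clique exists, the pieces of $G\setminus C$ together with their attachments need not be quasi-line graphs free of nonlinear homogeneous pairs in any obvious way, and you have not shown that the pieces are linear interval graphs rather than arbitrary smaller compositions (which would not terminate the induction in the form you need). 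In short, the plan identifies the right flavour of decomposition but underestimates where the work lies; a correct argument along these lines would essentially have to reproduce a substantial portion of \cite{clawfree5, clawfree7}.
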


In fact, their analysis of {\em unbreakable quasi-line stripes} in \cite{clawfree7} implies that we need only consider non-degenerate linear interval strips.  We say that a linear interval strip $(S_e,X_e,Y_e)$ is {\em trivial} if $S_e$ consists of a single vertex contained in both $X_e$ and $Y_e$.  We say a linear interval strip $(S_e,X_e,Y_e)$ is {\em canonical} if it is trivial, or if $X_e$ and $Y_e$ are disjoint and nonempty, and $S_e$ is connected.  Theorem 1.1 in \cite{clawfree7} and Theorem 9.1 in \cite{clawfree5} directly imply the desired structure theorem:

\begin{theorem}\label{thm:structure2}
Any connected quasi-line graph containing no nonlinear homogeneous pair of cliques is either a circular interval graph or a composition of canonical linear interval strips.
\end{theorem}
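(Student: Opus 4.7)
The plan is to strengthen Theorem \ref{thm:structure1} by showing that, in the composition-of-strips alternative, the strips can always be taken to be canonical. The circular interval alternative requires no modification, so all the work lies in the composition case.

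I would begin by applying Theorem \ref{thm:structure1} to obtain an initial expression of $G$ as a composition of linear interval strips $\{(S_e, X_e, Y_e) : e \in E(H)\}$ with underlying multigraph $H$. A strip fails to be canonical precisely when it is not trivial and one of the following holds: $S_e$ is disconnected, $X_e$ or $Y_e$ is empty, or $X_e \cap Y_e \neq \emptyset$. I would then rewrite the composition to eliminate each of these three pathologies, checking after every modification that the underlying graph $G$ is unchanged and that the new strips remain linear interval strips.

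Disconnected strips are the easiest: each component of $S_e$ inherits its vertices of $X_e$ and $Y_e$ and becomes its own strip on a newly introduced parallel edge of $H$; the hub cliques at the endpoints of $e$ are unaffected because $X_e$ and $Y_e$ are simply partitioned among the components. A strip with an empty end-clique contributes nothing to the hub at that end of $e$, so it can be absorbed by replacing $e$ with a pendant edge or loop at its nonempty end without altering $G$. The remaining case, where $X_e$ and $Y_e$ share a vertex while $|S_e|>1$, is the subtle one: here I would use the linear interval representation of $S_e$ to show that a shared vertex forces a very restricted configuration that either collapses the strip to a trivial one or permits a clean split into canonical pieces.

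The main obstacle is dispatching this last case without losing structural information or introducing spurious adjacencies between the resulting strips outside the designated hub cliques. To discharge it I would appeal to the Chudnovsky--Seymour analysis of unbreakable quasi-line strips, namely Theorem 1.1 of \cite{clawfree7} together with Theorem 9.1 of \cite{clawfree5}; under the standing hypothesis that $G$ has no nonlinear homogeneous pair of cliques, this classification is sharp enough to rule out precisely the problematic configurations. Once that input is granted, rewriting the composition into one using only canonical strips is routine bookkeeping, and Theorem \ref{thm:structure2} follows by combining Theorem \ref{thm:structure1} with the cited results.
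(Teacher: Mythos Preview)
Your proposal is correct and rests on the same external input as the paper: Theorem~1.1 of \cite{clawfree7} and Theorem~9.1 of \cite{clawfree5}. The paper, however, does not pass through Theorem~\ref{thm:structure1} at all; it simply asserts that those two cited results \emph{directly} imply Theorem~\ref{thm:structure2}, and then remarks in passing that the alternative route---showing by hand that any composition of linear interval strips can be rewritten as a composition of canonical ones---is ``straightforward (but tedious)''. What you have sketched is precisely a partial version of that tedious alternative: you handle disconnected strips and empty end-cliques yourself, but for the overlapping end-clique case you fall back on the very citations that already give the whole theorem outright. So your scaffolding via Theorem~\ref{thm:structure1} is sound but redundant; once you are willing to invoke \cite{clawfree7} and \cite{clawfree5}, there is nothing left to prove, which is exactly the paper's stance.
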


It is actually straightforward (but tedious) to prove that every composition of linear interval strips is a composition of canonical linear interval strips.

As we will show, we can restrict our attention to connected quasi-line graphs with minimum degree at least $t(G)$, containing no nonlinear homogeneous pair of cliques, and containing no clique cutset.  We call such graphs {\em robust}, and note that the structure theorem still applies, and hence:

\begin{theorem}\label{thm:structure}
Any robust quasi-line graph is either a circular interval graph or a composition of canonical linear interval strips.
\end{theorem}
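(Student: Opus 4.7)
The plan is to observe that Theorem \ref{thm:structure} is an immediate corollary of Theorem \ref{thm:structure2}. By definition, a robust quasi-line graph is a connected quasi-line graph that contains no nonlinear homogeneous pair of cliques (plus two further conditions: minimum degree at least $t(G)$ and no clique cutset). Since the definition of robustness only strengthens the hypotheses of Theorem \ref{thm:structure2}, every robust quasi-line graph already satisfies those hypotheses, and the conclusion follows at once.

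Concretely, the single step is: let $G$ be robust, invoke Theorem \ref{thm:structure2} on $G$, and read off that $G$ is either a circular interval graph or a composition of canonical linear interval strips. The extra requirements of minimum degree at least $t(G)$ and no clique cutset simply place $G$ in a narrower class and play no role in deriving the structural conclusion; they are carried along so that they can be exploited later (in the sections that deal with the circular interval and composition cases).

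There is essentially no obstacle in the proof of this statement as phrased. The genuine content lies in the claim, made informally just before the theorem, that we may restrict attention to robust graphs when proving Theorem \ref{thm:main}. That reduction — handling low-degree vertices, splitting across clique cutsets, and absorbing nonlinear homogeneous pairs of cliques into smaller quasi-line subgraphs with the same $\chi$ and $\chi_f$ — is the real work, but it will be carried out separately and is not needed here. Accordingly, my proposed proof of Theorem \ref{thm:structure} is a single sentence citing Theorem \ref{thm:structure2}.
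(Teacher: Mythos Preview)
Your proposal is correct and matches the paper's approach exactly: the paper simply remarks that ``the structure theorem still applies'' to robust graphs, i.e., it derives Theorem~\ref{thm:structure} as an immediate specialization of Theorem~\ref{thm:structure2}. Your observation that the additional robustness conditions are carried along for later use but play no role here is also in line with the paper's treatment.
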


The fact that a robust graph does not contain a clique cutset easily implies that if it is a composition of strips and is not a circular interval graph, the underlying multigraph contains no loops.

%%%%%%%%%%%%%%%%%%%%%%%%%%%%%%%%%%%%%%%%%%%%%%%%%%%%%%%%%%%%%%%%%%%%%%%%%%%%%%%%
%%%%%%%%%%%%%%%%%%%%%%%%%%%%%%%%%%%%%%%%%%%%%%%%%%%%%%%%%%%%%%%%%%%%%%%%%%%%%%%%
%%%%%%%%%%%%%%%%%%%%%%%%%%%%%%%%%%%%%%%%%%%%%%%%%%%%%%%%%%%%%%%%%%%%%%%%%%%%%%%%
%%%%%%%%%%%%%%%%%%%%%%%%%%%%%%%%%%%%%%%%%%%%%%%%%%%%%%%%%%%%%%%%%%%%%%%%%%%%%%%%
\section{A proof sketch}\label{sec:sketch}

Here we sketch the proof of Theorem \ref{thm:main}.  We will show that a minimum counterexample cannot be a circular interval graph, must be robust, and cannot be a composition of canonical linear interval strips.  Thus, Theorem \ref{thm:structure} tells us that Theorem \ref{thm:main} holds.  The fact that no minimum counterexample is a circular interval graph or non-robust follows easily from known results, as we discuss at the end of this section.

It remains to prove that no minimum counterexample to Theorem \ref{thm:main} can be a robust composition of  canonical linear interval strips.  We actually prove directly that every robust composition of canonical linear interval strips has a $t(G)$-colouring.  To do so we take a $t(G)$-colouring $C_h$ of the graph $G_h$ formed by the union of the hub cliques, and combine it with a $t(G)$-colouring $C_e$ of each strip $S_e$.  Since the only edges from the strips to the rest of the graph are from the end-cliques, this will be possible if each $C_e$ agrees with $C_h$ on the end-cliques. Actually, since all the vertices in a given end-clique have identical neighbourhoods outside of the strip that contains them, a weaker condition ensures we can combine them.

\begin{observation}
If there is a $t(G)$-colouring $C_h$ of $G_h$, and each strip $S_e$ has a $t(G)$-colouring $C_e$ such that the following three invariants of $C_e$ and $C_h$ agree, then there is a $t(G)$-colouring of $G$: \begin{itemize}
\item the number of colour classes intersecting both $X_e$ and $Y_e$,
\item the number of colour classes intersecting $X_e$ but not $Y_e$, and
\item the number of colour classes intersecting $Y_e$ but not $X_e$.
\end{itemize}
\end{observation}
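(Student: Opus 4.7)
The plan is to glue the strip colourings $C_e$ to $C_h$ after relabelling the palette used on each strip. Because the strips $\{S_e\mid e\in E(H)\}$ partition $V(G)$, it suffices to choose, for each $e\in E(H)$, a permutation $\pi_e$ of the $t(G)$ colours, and then to define $C(u) = \pi_e(C_e(u))$ for $u\in S_e$. I will pick the $\pi_e$ so that on every end-clique, the resulting $C$ uses exactly the same colour set as $C_h$; once that is arranged, properness falls out almost for free.

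For the choice of $\pi_e$, set $A = C_e(X_e)$, $B = C_e(Y_e)$, $A_h = C_h(X_e)$, $B_h = C_h(Y_e)$, each viewed as a subset of $[t(G)]$. These are genuine sets of size equal to the corresponding end-clique, since the end-cliques are cliques and both colourings are proper; in particular $|A| = |X_e| = |A_h|$ and $|B| = |Y_e| = |B_h|$. The three hypothesised invariants translate directly into
\[
|A\cap B| = |A_h \cap B_h|,\quad |A\setminus B| = |A_h\setminus B_h|,\quad |B\setminus A| = |B_h\setminus A_h|.
\]
So one can choose a bijection $A\cup B \to A_h\cup B_h$ sending $A\to A_h$ and $B\to B_h$ coordinate-wise (first on $A\cap B$, then on the two symmetric differences) and extend it arbitrarily to a permutation $\pi_e$ of $[t(G)]$.

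To verify properness of $C$, note first that within each strip $S_e$ the restriction of $C$ is a relabelling of the proper colouring $C_e$, hence proper. Every remaining edge of $G$ lies inside some hub clique $C_v$, which is the union of the end-cliques of the strips incident to $v$. By construction $C$ uses the colour set $\pi_e(A) = A_h$ on $X_e$ (and similarly $B_h$ on $Y_e$), so the multiset of colours that $C$ puts on $C_v$ equals the multiset $C_h$ puts on $C_v$; since $C_h$ is proper on $G_h$ these colours are all distinct, and therefore $C$ is proper on $C_v$ too. The total palette of $C$ is at most $t(G)$, which gives the claim. There is no real obstacle here --- the argument is essentially a counting/bijection statement --- but it is worth noting that the freedom to relabel inside a strip is afforded precisely by the fact that every vertex of $X_e$ has the same outside-strip neighbourhood $C_v\setminus X_e$, so permuting colours inside an end-clique cannot break anything in the rest of $G$.
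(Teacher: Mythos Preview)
Your proof is correct and follows the same approach as the paper: permute the colour names on each strip so that the colour set used on each end-clique coincides with that of $C_h$, then observe that properness follows inside strips by relabelling and on hub cliques because $C_h$ was proper there. The paper states this in one sentence (``permute the colour class names in $C_e$ so that each stable set is assigned the same colour as a colour class of the same type in $C_h$''); you have simply written out the bijection and the verification in detail.
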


\begin{proof}For each $e$, permute the colour class names in $C_e$ so that each stable set is assigned the same colour as a colour class of the same ``type'' in $C_h$, i.e.\ with the same size intersection in both $X_e$ and $Y_e$.  The union of the colourings $C_e$ on the strips is a $t(G)$-colouring of $G$.
\end{proof}

\noindent{\bf Remark:} Since $X_e$ and $Y_e$ are cliques, if $w_e$ is the number of colours appearing in both end-cliques, then there are $|X_e|-w_e$ colours intersecting only $X_e$, and $|Y_e|-w_e$ colour classes intersecting only $Y_e$.  Thus the condition in the observation is simply that for every $e\in E(H)$, $w_e$ is the same for $C_h$ and $C_e$.  In the context of a fractional colouring, $w_e$ denotes the weight of colour classes intersecting both $X_e$ and $Y_e$.\\

Let $G$ be a robust composition of canonical linear interval strips.  Given the observation, Theorem \ref{thm:main} follows from the following three lemmas.  We define
$$t'(G) = \chi_f(G)+\tfrac 1 3\sqrt{\omega(G)}.$$
Note that $\chi_f(G) < t'(G) < t(G)$.

\begin{lemma}\label{lem:a}
Let $G$ be a robust composition of canonical linear interval strips.  There is a fractional $t'(G)$-colouring of $G$ such that $w_e$ is integral for every $e \in E(H)$.  In particular, given any fractional $\chi_f(G)$-colouring of $G$ with overlaps $\{w_e \mid e\in E(H)\}$, there is a fractional $t'(G)$-colouring of $G$ with overlaps $\{\lfloor w_e \rfloor \mid e\in E(H)\}$. 
\end{lemma}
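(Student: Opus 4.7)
The plan is to start from the given fractional $\chi_f(G)$-colouring and modify it strip-by-strip so that each overlap $w_e$ becomes $\lfloor w_e\rfloor$, while bounding the total added weight by $\tfrac{1}{3}\sqrt{\omega(G)}$. Write $\alpha_e=w_e-\lfloor w_e\rfloor\in[0,1)$ for the fractional part to be removed at strip $e$.

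The local move at a single strip is straightforward. Each ``both''-type colour class, restricted to $S_e$, has the form $\{x\}\cup I\cup\{y\}$ with $x\in X_e$, $y\in Y_e$, and $I$ a stable set of interior vertices of the strip. For any partition $I=I_1\cup I_2$, the sets $\{x\}\cup I_1$ and $\{y\}\cup I_2$ are stable in $G$: the first lies in $S_e$ and meets $C_u$ only at $x$, so it is of ``$X_e$-only'' type, and the second is analogously of ``$Y_e$-only'' type. Removing $\delta$ units of the ``both''-class and adding $\delta$ units of each of the two replacements preserves every vertex-covering equality, decreases $w_e$ by $\delta$, and adds $\delta$ to the total weight. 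Applied at strip $e$ with total $\delta=\alpha_e$, this rounds $w_e$ down to $\lfloor w_e\rfloor$ at local cost $\alpha_e$.

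The main obstacle, and the step I expect to require real work, is bounding the \emph{global} cost by $\tfrac{1}{3}\sqrt{\omega(G)}$, since summing local costs only gives $\sum_e\alpha_e$, which can be as large as $|E(H)|$. To batch the rounding I would exploit a matching structure: because each hub clique $C_v$ meets any stable set in at most one vertex, the set of strips on which a given stable set is of ``both'' type forms a matching in the underlying multigraph $H$. Consequently, peeling $\pi$ units of such a stable set simultaneously reduces $w_e$ for every edge $e$ in that matching, so the total rounding can be realised by removing a sequence of ``both''-type classes carrying a fractional matching decomposition of the excess vector $(\alpha_e)_{e\in E(H)}$. At each hub $v$ the coupled rounding errors are constrained by the $|C_v|\leq\omega(G)$ vertex-covering equalities on $C_v$, so a Chernoff-type argument (randomising the rounding direction and then making a deterministic adjustment back to the floor) or an iterative-rounding argument on the natural LP bounds the overhead per hub by $O(\sqrt{|C_v|})=O(\sqrt{\omega(G)})$. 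Because each added ``one-side-only'' class sits in a single strip and is charged to only one hub, this local discrepancy translates into a global overhead of $O(\sqrt{\omega(G)})$, and tightening the constants should yield the claimed $\tfrac{1}{3}\sqrt{\omega(G)}$ bound. The first (unconditional) sentence of the lemma then follows by applying the construction to any optimal fractional $\chi_f(G)$-colouring.
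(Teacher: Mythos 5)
Your local move is essentially the paper's: split a ``both''-type colour class at strip $e$ into two sub-stable-sets (one meeting only $X_e$, one meeting only $Y_e$) at cost $\alpha_e<1$ per strip. But the global part of your argument has a real gap, and you have missed the one structural fact that makes the lemma true at all.

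You never use the hypothesis that $G$ is \emph{robust}, yet it is indispensable. Robustness gives minimum degree at least $t(G)$, and the paper uses this (Lemma~\ref{lem:boundd}) to show that every nontrivial strip satisfies $|X_e|\geq 3\sqrt{\chi_f(G)}$. Since each hub clique $C_v$ has size at most $\omega(G)$ and the $X_e$ for edges $e$ leaving $v$ are disjoint inside $C_v$, this yields the crucial bound $D(H)\leq\tfrac13\sqrt{\omega(G)}$ on the number of nontrivial strips incident to any vertex of $H$. Without this, $D(H)$ can be as large as $\omega(G)$ (take many strips with $|X_e|=1$ sharing a hub), and then no amount of batching along matchings can bring the extra weight below $\tfrac13\sqrt{\omega(G)}$. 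The bound on $D(H)$ is not a refinement you can defer; it is where the hypothesis enters.

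Given the bound on $D(H)$, the paper's global step is elementary and deterministic: greedily colour the nontrivial edges of $H$ with $D(H)$ colours so that edges sharing an endpoint get different colours (Lemma~\ref{lem:hedgecolouring}). For one colour class $E_1$ (a matching of strips, no two sharing a hub), remove the $X_e$-vertex from the peeled ``both'' classes and re-insert that lost weight using stable sets that pick one vertex from each $X_e$, $e\in E_1$ --- legal because these $X_e$ are pairwise non-adjacent. This fixes every $w_e$ in $E_1$ at total cost less than $1$, not $\sum_{e\in E_1}\alpha_e$. Repeating for each of the $D(H)$ classes gives total extra weight less than $D(H)\leq\tfrac13\sqrt{\omega(G)}$, which is exactly $t'(G)-\chi_f(G)$. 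Your ``Chernoff-type / iterative-rounding'' sketch replaces this with an unproved per-hub discrepancy bound of $O(\sqrt{|C_v|})$, and even granting it, you give no reason the per-hub overheads do not add up across hubs; nor is it clear how a randomised rounding that sometimes rounds $w_e$ up, followed by an unspecified ``adjustment back to the floor,'' produces the claimed constant. To repair the proof, establish the bound on $D(H)$ from robustness and use the matching-batching argument; the probabilistic machinery is unnecessary.
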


A colouring guaranteed by Lemma \ref{lem:a} gives us fractional colourings of the hub graph and strips on which the $w_e$ agree.  We want to convert these to integral colourings.  To deal with the strips, we prove:

%Lemma \ref{lem:a} gives us fractional colourings of $G_h$ and each $S_e$ with matching integer overlaps $w_e$.  The remaining two lemmas tell us that we can emulate these fractional colourings with integer colourings without using too many extra colours.

\begin{lemma}\label{lem:b}
For integers $r$ and $k$, suppose a linear interval strip $S_e$ has a fractional $k$-colouring in which $w_e=r$.  Then $S_e$ has an integral $k$-colouring in which $w_e=r$.
\end{lemma}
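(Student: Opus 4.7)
Since $S_e$, being a linear interval graph, is a proper interval graph and hence perfect, the fractional $k$-colouring immediately forces $\omega(S_e)\le k$. My plan is to reformulate the problem as one about directed paths in an auxiliary DAG, produce $r$ pairwise vertex-disjoint spanning paths via Menger's theorem, and then colour the residual subgraph with disjoint end-clique palettes.

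Order $V(S_e)=\{v_1<v_2<\cdots<v_n\}$ along the real line with $X_e=\{v_1,\ldots,v_p\}$ and $Y_e=\{v_{n-q+1},\ldots,v_n\}$, where $p=|X_e|$ and $q=|Y_e|$. Form the DAG $D$ on $V(S_e)$ with an arc $(v_i,v_j)$ whenever $i<j$ and $v_iv_j\notin E(S_e)$. Stable sets of $S_e$ correspond bijectively to directed paths in $D$, so a $k$-colouring of $S_e$ is a decomposition of $V(S_e)$ into $k$ such paths, and a colour class contributes to $w_e$ exactly when its path starts in $X_e$ and ends in $Y_e$. In the first stage I would find $r$ pairwise vertex-disjoint $(X_e,Y_e)$-paths $T_1,\ldots,T_r$ in $D$. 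For any $(X_e,Y_e)$-vertex-cut $C$ in $D$, each spanning stable set in the fractional colouring meets $C$ in at least one vertex, and each vertex of $C$ carries total fractional weight $1$; summing the weight of spanning stable sets over $C$ yields $r\le|C|$. Hence the minimum $(X_e,Y_e)$-vertex-cut in $D$ is at least $r$, and Menger's theorem supplies the required vertex-disjoint paths, which become $r$ of our final colour classes.

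In the second stage I would colour the residual proper interval graph $S_e':=S_e-\bigcup_iV(T_i)$ with $k-r$ further colours so that no new colour class is spanning. The residual end-cliques have sizes $p-r$ and $q-r$, and the fractional hypothesis supplies the lower bound $r\ge\max(0,p+q-k)$ (by the same cut argument applied to $X_e$ and $Y_e$ together with $\sum_Sw(S)=k$), which ensures $(p-r)+(q-r)\le k-r$, i.e.\ disjoint end-clique palettes fit inside $k-r$ colours. The technical point is to choose the $T_i$ in stage one so that $\omega(S_e')\le k-r$: I would refine Menger using the fractional weights themselves (not merely the min-cut bound) to ensure the $T_i$ simultaneously meet every maximum clique of $S_e$, and then split $S_e'$ at a minimum $(X_e',Y_e')$-separator clique and colour the left and right halves with disjoint palettes of sizes $p-r$ and $q-r$, filling the middle greedily. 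The main obstacle is precisely this compatibility between stages: producing the $r$ disjoint spanning paths subject to the extra clique-hitting constraint needed to make $\omega(S_e')\le k-r$, which requires a weighted refinement of standard Menger driven by the fractional colouring rather than just its support.
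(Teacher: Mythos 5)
Your translation to $(X_e,Y_e)$-paths in the complement DAG is a correct and clean reformulation (the umbrella property of linear interval orderings makes ``consecutively non-adjacent'' equivalent to ``stable''), and the dual-style counting showing every $(X_e,Y_e)$-vertex-cut has size at least $r$ is sound. But, as you flag yourself, the proof then hinges on a requirement that Menger simply does not provide. You need the chosen paths $T_1,\ldots,T_r$ to make the residual graph $S_e'=S_e-\bigcup_i V(T_i)$ colourable with $k-r$ colours using disjoint end-clique palettes; a necessary condition is $\omega(S_e')\le k-r$, i.e.\ for every clique $K$ of $S_e$ the family $\{T_i\}$ must hit $K$ at least $|K|-(k-r)$ times. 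Since each $T_i$ is a stable set it meets $K$ at most once, and an arbitrary Menger family of $r$ disjoint paths can miss a large interior clique entirely, leaving $\omega(S_e')$ far too big. ``A weighted refinement of standard Menger driven by the fractional colouring'' names the obstacle but does not remove it. Moreover there is a second unresolved step: even granting $\omega(S_e')\le k-r$, you must still produce a $(k-r)$-colouring of $S_e'$ in which the two residual end-cliques use disjoint colours --- that is the case $w=0$ of the very lemma being proved, so the inductive structure of your argument is not closed.

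The paper sidesteps both difficulties with a single global construction rather than a two-stage one. It wraps $S_e$ into a circular interval graph $F_e$ by appending three new cliques $V_A,V_C,V_B$ of sizes $|X_e|-r$, $k-|X_e|-|Y_e|+r$, $|Y_e|-r$ and closing the circle so that $Y_e\cup V_A\cup V_C$, $V_A\cup V_C\cup V_B$, and $V_C\cup V_B\cup X_e$ are cliques (of sizes $k$, $k-r$, $k$). The given fractional $k$-colouring of $S_e$ extends to $F_e$, so $\chi_f(F_e)=\omega(F_e)=k$, and Lemma~\ref{lem:nk} (Niessen--Kind round-up for circular interval graphs) gives $\chi(F_e)=k$. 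The $k$-clique and $(k-r)$-clique through $V_C$ then force \emph{exactly} $r$ colours to be shared by $X_e$ and $Y_e$ in any proper $k$-colouring of $F_e$, and restricting to $S_e$ finishes. That one appeal to the round-up theorem replaces the entire ``Menger plus clique-hitting plus re-colouring of the residual'' pipeline you would otherwise have to make rigorous.
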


And to deal with the hub graph, we prove:

\begin{lemma}\label{lem:c}
Let $G$ be a robust composition of canonical linear interval strips.  If $G_h$ has a fractional $t'(G)$-colouring in which each $w_e$ is an integer $r_e$ then it has an integer $t(G)$-colouring in which $w_e=r_e$ for all $e$.
\end{lemma}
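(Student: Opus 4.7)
}

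The plan is to reduce the colouring problem on $G_h$ to an edge-colouring problem on an auxiliary multigraph $H^{*}$ built from $H$, and then to apply the algorithmic version of Kahn's theorem (Theorem \ref{thm:ss}).

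First, I would use the given fractional colouring to extract, for each edge $e=uv$ of $H$, a matching $M_e$ of exactly $r_e$ pairs $(x,y)\in X_e\times Y_e$ with $xy\notin E(G_h)$. Such a matching exists because the $r_e$ units of fractional weight on colour classes meeting both $X_e$ and $Y_e$ constitute a fractional matching of size $r_e$ in the bipartite non-adjacency graph on $X_e\cup Y_e$, and the bipartite matching polytope is integral. I would choose $M_e$ greedily in a way that also covers every vertex of $X_e\cup Y_e$ that has a bipartite neighbour on the opposite side in $G_h$; this is possible because bipartite adjacencies in a canonical linear interval strip have a monotone ``staircase'' structure, so the $\tilde B_e$-saturating vertices form a small set that can be absorbed by $M_e$.

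Next, I would define $H^{*}$ on vertex set $V(H)$ together with a fresh leaf for each ``unmatched'' end-clique vertex: for every $e=uv\in E(H)$ insert $r_e$ parallel edges between $u$ and $v$ (one per pair of $M_e$), together with $|X_e|-r_e$ pendant edges at $u$ and $|Y_e|-r_e$ pendant edges at $v$ going to distinct leaves. Each vertex of $G_h$ is then labelled by an edge of $H^{*}$: matched pairs share the label of their common edge, unmatched vertices receive their unique pendant. This identification sends hub cliques of $G_h$ bijectively onto the cliques of edges of $H^{*}$ at each hub vertex, so every stable set of $G_h$ projects to a matching of $H^{*}$ (i.e.\ a stable set of $L(H^{*})$). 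A short calculation then shows $\chi_f(L(H^{*}))\le t'(G)$: each fractional colour class of $G_h$ projects to a matching and, with the matching $M_e$ chosen compatibly with the fractional colouring (so that matched pairs are fractionally coupled), no weight blow-up occurs at the common edges.

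Applying Theorem \ref{thm:ss} to the line graph $L(H^{*})$ yields an integer colouring using at most $\chi_f(L(H^{*}))+\sqrt{\tfrac{9}{2}\chi_f(L(H^{*}))}\le t'(G)+\sqrt{\tfrac{9}{2}t'(G)}$ colours, and the choice $t'(G)=\chi_f(G)+\tfrac13\sqrt{\omega(G)}$ together with $\omega(G)\le \chi_f(G)$ shows that this quantity is at most $t(G)=\lfloor\chi_f(G)+3\sqrt{\chi_f(G)}\rfloor$. I would then transfer the colouring back by assigning to each vertex of $G_h$ the colour of its labelling edge of $H^{*}$: within each hub clique the colouring is proper because distinct edges at a common vertex of $H^{*}$ receive distinct colours; across a strip, matched pairs are non-adjacent by construction, and the way $M_e$ was chosen guarantees that any remaining bipartite adjacency between $X_e$ and $Y_e$ is resolved through a shared hub vertex in $H^{*}$. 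The overlap $w_e$ in the resulting colouring equals $|M_e|=r_e$, as required.

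The main obstacle is the fractional bound $\chi_f(L(H^{*}))\le t'(G)$ together with showing that the projection does not leave any bipartite adjacencies of $G_h$ uncovered by $L(H^{*})$. Both issues hinge on choosing $M_e$ carefully, using simultaneously the integrality of $w_e=r_e$, the fractional colouring of $G_h$, and the monotone adjacency structure of canonical linear interval strips; once $M_e$ is chosen correctly, the projected fractional colouring is valid and Sanders--Steurer completes the argument.
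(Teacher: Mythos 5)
Your plan is the right one---build an auxiliary multigraph from $H$, apply Sanders--Steurer (Theorem~\ref{thm:ss}) to its line graph, and transfer the colouring back---and in broad outline it matches the paper's proof via Lemma~\ref{lem:newlemma}. However, there is a concrete flaw in how you construct $H^{*}$, and the patch you propose to compensate for it does not work in general.

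You route the $|X_e|-r_e$ pendant edges at $u$ and the $|Y_e|-r_e$ pendant edges at $v$ to \emph{distinct} leaves. Consequently, in $L(H^{*})$ the image of a strip is not a clique: a pendant at $u$ and a pendant at $v$ share no endpoint of $H^{*}$, hence are non-adjacent. This causes two problems. First, an integer colouring of $L(H^{*})$ may assign the same colour to a $u$-pendant and a $v$-pendant, so after transferring back to $G_h$ the number of colours common to $X_e$ and $Y_e$ can strictly exceed $r_e$; your final claim that the overlap equals $|M_e|=r_e$ is therefore unjustified, and an inflated overlap cannot be matched by the strip colourings provided by Lemma~\ref{lem:b}. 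Second, if two unmatched vertices $x\in X_e$, $y\in Y_e$ are \emph{adjacent} in $G_h$, their labels are non-adjacent in $L(H^{*})$ and can receive the same colour, which is not a proper colouring of $G_h$. You try to rule this out by choosing $M_e$ to saturate every vertex with a cross-neighbour, but this is not always possible: if $X_e$ is complete to $Y_e$ in $G_h$, then the non-adjacency bipartite graph is empty, so $r_e=0$ and $M_e=\emptyset$, yet every vertex has a cross-neighbour. (It also places an unstated and unproven structural burden on the strips.)

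The paper sidesteps both issues with a small but essential change to the construction: for each strip $e$ it introduces a \emph{single} new vertex $v_e$ and routes all $|X_e|-w_e$ pendants at $x_e$ and all $|Y_e|-w_e$ pendants at $y_e$ to that same $v_e$. Then in $G'=L(H')$ the image $S'_e$ of the strip is a clique of size $|X_e|+|Y_e|-w_e$, so any proper colouring of $G'$ gives exactly $w_e$ shared colours between $X'_e$ and $Y'_e$, and since $S'_e$ is a clique, unmatched pendants at $u$ and $v$ automatically receive distinct colours, making the transfer to $G_h$ proper regardless of cross-adjacencies. With that change (and dropping the ``saturating'' requirement on $M_e$, which is then unnecessary), your argument becomes essentially the paper's. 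The remaining steps of your sketch---the fractional matching argument giving $M_e$ of size $r_e$, the bound $\chi_f(L(\cdot))\le t'(G)$ from the given fractional colouring, and the arithmetic showing $t'(G)+\sqrt{\tfrac92 t'(G)}\le t(G)$---are correct and mirror the paper.
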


We now briefly sketch the proofs of these three lemmas.

To prove Lemma \ref{lem:a} we massage a $\chi_f(G)$-colouring of $G$ to obtain the desired $t'(G)$-colouring.  In doing so we are permitted to use $t'-\chi_f(G) = \frac 13 \sqrt{\omega(G)}$ new colours, and we can essentially handle each strip separately.  If a strip $(S_e,X_e,Y_e)$ is trivial then we say that $e$ is a {\em trivial edge}.  If $e$ is trivial, note that in any fractional colouring, $w_e=|X_e|=|Y_e|=1$, thus $w_e$ is an integer and no massaging is necessary.  The difficulty in proving Lemma \ref{lem:a} lies in handling the nontrivial strips.
We will see that for any nontrivial strip $e$ we can modify any fractional colouring of $G$, at the cost of at most one extra colour,  so as to make the overlap $w_e$ integral. Furthermore, it turns out that for any set of nontrivial strips, if we choose an endpoint of each corresponding edge so that no vertex of $H$ is chosen twice,  we can use the same extra colour for all these strips, and therefore make all their overlaps integral at the cost of one extra colour.  As we see below, it follows immediately from the fact that $G$ is robust and hence has minimum degree at least $t(G)$, that we can partition the nontrivial strips into $\frac 13 \sqrt{\omega(G)}$ such sets.  We flesh out this proof of Lemma \ref{lem:a} in the next section.

Lemma \ref{lem:b} follows naturally from the fact that for circular interval graphs, $\chi=\lceil \chi_f\rceil$ (see Lemma \ref{lem:nk} below).  We give the details in Section \ref{sec:emulating}.

To prove Lemma \ref{lem:c} we exploit the fact that $G_h$ closely resembles a line graph $G'$ constructed as the composition of strips $\{ (S'_e,X'_e,Y'_e) \mid e\in E(H)\}$ such that
\begin{itemize}
\item $S'_e$ is a clique of size $|X_e|+|Y_e|-w_e$.
\item $X'_e$ and $Y'_e$ are cliques of size $|X_e|$ and $|Y_e|$ respectively, with $|X'_e\cap Y'_e|=w_e$.
\end{itemize}
\begin{figure}
%\begin{center}
\includegraphics[scale=.9]{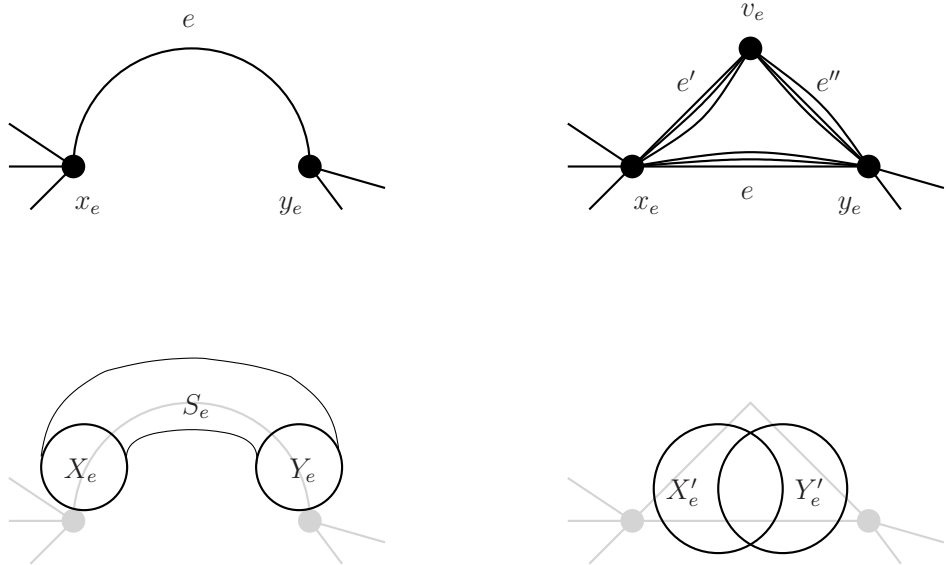}
%\end{center}
\caption{We construct a line graph $G'$ from $G$ by contracting each strip $(S_e,X_e,Y_e)$ into a $(S'_e,X'_e,Y'_e)$ where $S'_e$ is a clique, $X'_e \cup Y'_e$ together cover $S'_e$, and $|X'_e\cap Y'_e|=w_e$.}
\label{fig:contraction}
\end{figure}
Now $G'$ is the line graph of the multigraph $H'$, which we construct from $H$ as follows.  Take each $e\in E(H)$, say from $x_e$ to $y_e$.  Add a vertex $v_e$, put $w_e$ edges between $x_e$ and $y_e$, put $|X_e|-w_e$ edges between $x_e$ and $v_e$, and put $|Y_e|-w_e$ edges between $y_e$ and $v_e$ (see Figure \ref{fig:contraction}).  It is easy to confirm that $G'=L(H')$.  In Section \ref{sec:proof} we prove Lemma \ref{lem:c} by applying Theorem \ref{thm:ss} to $G'$.

%%%%%%%%%%%%%%%%%%%%%%%%%%%%%%%%%%%%%%%%%%%%%%%%%%%%%%%%%%%%%%%%%%%%%%%%%%%%%%%%
%%%%%%%%%%%%%%%%%%%%%%%%%%%%%%%%%%%%%%%%%%%%%%%%%%%%%%%%%%%%%%%%%%%%%%%%%%%%%%%%
\subsection{The easy cases}

Before dealing with compositions of canonical linear interval strips we must explain why a graph $G$ cannot be a minimum counterexample to Theorem \ref{thm:main} or Theorem \ref{thm:mainalgo} if it is a circular interval graph or it is not robust.  This follows immediately from four known lemmas.  The first is a result of Niessen and Kind \cite{niessenk00}:

\begin{lemma}\label{lem:nk}
For any circular interval graph $G$, $\chi(G) = \lceil \chi_f(G)\rceil$.
\end{lemma}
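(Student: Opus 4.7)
The plan is to prove the non-trivial direction $\chi(G) \leq \lceil \chi_f(G)\rceil$ constructively, since $\chi(G)\ge \chi_f(G)$ is immediate. Fix a circular interval representation of $G$ with vertices $v_0,\ldots,v_{n-1}$ in cyclic order around the unit circle, and let $k=\lceil\chi_f(G)\rceil$. Note that $k\ge \omega(G)$, which is the obvious necessary lower bound for any proper colouring. First I would dispose of the degenerate case: if some point of the circle lies in no arc, cutting the circle there yields a linear interval representation, so $G$ is an interval graph and hence perfect, giving $\chi(G)=\omega(G)\le k$. So assume the arcs cover the entire circle, in which case the maximal cliques of $G$ are exactly the maximal sets of vertices that are consecutive in the cyclic order and lie in a common arc.

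The core step is to produce a proper $k$-colouring $c:V(G)\to \mathbb{Z}/k\mathbb{Z}$. I would parametrise such colourings by their "increments" $\delta_i = c(v_{i+1})-c(v_i)\pmod k$. Properness translates into two conditions: (i) for every maximal clique, which is a consecutive run $v_i,v_{i+1},\ldots,v_{i+\ell-1}$ in the cyclic order, the partial sums $\delta_i, \delta_i+\delta_{i+1},\ldots$ are all nonzero modulo $k$; and (ii) the wraparound identity $\sum_{i=0}^{n-1}\delta_i\equiv 0\pmod k$ holds. Condition (i) is locally satisfiable whenever $k\ge\omega(G)$ (for instance by taking each $\delta_i=1$); the content of the lemma lies in simultaneously enforcing (ii). I would obtain this by converting a fractional $\chi_f(G)$-colouring into a fractional feasible solution to the system (i)-(ii), where the fractional solution corresponds to a "flow" of colour around the circle whose total is a multiple of $\chi_f(G)$; because $k\ge\chi_f(G)$ there is enough slack to round this flow to an integral feasible solution.

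The main obstacle is making this rounding rigorous. The key observation is that the constraint matrix arising in (i) has the consecutive-ones property (each clique is an interval in the cyclic order), so the associated polyhedron is well-behaved: an optimal cyclic scheduling LP over such a matrix admits an integral optimum of the same value. I would package this either as a direct appeal to a totally-balanced/circular consecutive-ones integrality theorem, or through an explicit algorithmic argument that starts from a cut-based interval colouring using $\omega(G)$ colours and then uses one additional colour's worth of slack (available because $k\ge\lceil\chi_f(G)\rceil$) to reconcile the two ends at the cut. Either route reduces the global rounding to a one-dimensional cyclic adjustment and completes the proof of Lemma \ref{lem:nk}.
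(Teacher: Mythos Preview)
The paper does not prove Lemma~\ref{lem:nk}; it is quoted as a result of Niessen and Kind~\cite{niessenk00} and used as a black box, so there is no in-paper argument to compare your proposal against.

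Taken on its own, your proposal is a plan rather than a proof, and the plan has a real gap at the decisive step. Two points. First, condition (i) as written is not equivalent to properness on a clique: for the run $v_i,\ldots,v_{i+\ell-1}$ you need every \emph{sub-interval} sum $\sum_{j=a}^{b-1}\delta_j$ (for $i\le a<b\le i+\ell-1$) to be nonzero modulo $k$, not just the prefix sums starting at $i$. This is repairable (e.g.\ require $\delta_j\ge 1$ and bound each clique's total increment by $k-1$), but it should be stated correctly before you build on it.

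Second, and this is the substantive issue, the rounding step---which is the entire content of the lemma---is not carried out. You defer it to either ``a totally-balanced/circular consecutive-ones integrality theorem'' or an unspecified ``explicit algorithmic argument that \ldots uses one additional colour's worth of slack''. Circular-ones matrices are not totally unimodular, and the LP you are implicitly pointing at is essentially the fractional-colouring LP, whose optimum can be non-integral (take $C_5$, where $\chi_f=5/2$). So there is no off-the-shelf integrality theorem to invoke; the assertion that the integral optimum equals $\lceil\chi_f\rceil$ \emph{is} the lemma. You also do not explain how a fractional vertex colouring is converted into fractional increments $\delta_i$ satisfying (i) and (ii), nor why the slack $k-\chi_f(G)<1$ suffices to simultaneously fix the clique constraints and the wrap-around congruence after rounding. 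That argument is exactly where Niessen--Kind (building on Shih--Hsu) do real work, and your proposal does not supply it.
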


The second, due to Shih and Hsu \cite{shihh89}, tells us that we can optimally colour circular interval graphs efficiently:

\begin{lemma}\label{lem:shihhsu}
Given a circular interval graph $G$, we can find an optimal colouring of $G$ in $O(n^{3/2})$ time.
\end{lemma}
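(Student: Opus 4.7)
The plan is to combine the structural representation of circular interval graphs with the identity $\chi(G) = \lceil \chi_f(G)\rceil$ from Lemma~\ref{lem:nk} to drive an explicit construction. First I would compute a circular interval representation of $G$ in linear time using the Deng--Hell--Huang algorithm, yielding the points $p_1,\ldots,p_n$ on the unit circle together with the defining arcs. From this representation, $\omega(G)$ and $\chi_f(G)$ can both be extracted by a single circular sweep in $O(n)$ time, so the target value $k = \lceil \chi_f(G)\rceil$ is known before colouring begins.

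Next I would ``cut'' the circle at a suitable point to unroll $G$ into a proper (linear) interval graph, which admits a perfect elimination ordering and can be greedily coloured with exactly $\omega$ colours in linear time. If we pre-commit a specific colour assignment on a single maximum clique $K^*$ at the cut and then greedily extend along the unrolled graph, the resulting partial $k$-colouring will be optimal on the linear portion, but it need not close up consistently when the sweep returns to $K^*$: vertices on either side of the cut may demand different colours even though their cyclic adjacencies force agreement.

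The crux is therefore to reconcile the two ends. My plan is to phrase the closing condition as a permutation problem on the $k$ colour classes: the greedy sweep induces, for each starting assignment on $K^*$, a permutation $\pi$ describing how the palette has rotated upon return, and we need a starting assignment for which $\pi$ is the identity. Using the slack provided by $k \geq \chi_f(G)$, and in particular the fact that Lemma~\ref{lem:nk} rules out any integrality obstruction, I would argue that such a starting assignment always exists and can be located either by a search over $O(\sqrt{n})$ candidate cut points or by reducing the closing condition to a bipartite matching problem of size $O(\sqrt{n})$; either route yields the claimed $O(n^{3/2})$ bound. I expect the main obstacle to be showing that this slack is truly sufficient rather than merely sufficient ``on average'' — the surrounding algorithmic scaffolding is mostly routine bookkeeping around proper interval graph colouring and a circular sweep.
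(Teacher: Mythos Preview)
The paper does not prove this lemma: it is quoted as a known result due to Shih and Hsu \cite{shihh89} and is used as a black box. So there is no ``paper's own proof'' to compare against; your task here was simply to cite the result, not to re-derive the $O(n^{3/2})$ algorithm.

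That said, your sketch has real gaps if read as an attempted proof. First, the assertion that $\chi_f(G)$ can be read off in $O(n)$ time from a circular interval representation by ``a single circular sweep'' is not justified; you would need to say precisely which quantity you are sweeping for and why it equals $\chi_f$. Second, and more seriously, the $\sqrt{n}$ in your closing argument appears out of nowhere. You claim the reconciliation step can be handled either by searching over $O(\sqrt{n})$ candidate cut points or by a bipartite matching of size $O(\sqrt{n})$, but you give no reason why the number of candidates should be bounded by $\sqrt{n}$ rather than, say, $n$ or $\omega$. Without that bound the whole running-time analysis collapses. The actual Shih--Hsu argument does obtain $O(n^{3/2})$, but via a specific combinatorial reduction that you have not reproduced; invoking Lemma~\ref{lem:nk} to guarantee feasibility does not by itself control the size of the search. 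If you want a self-contained proof you would need to supply that missing structural argument; for the purposes of this paper, citing \cite{shihh89} is what is intended.
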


If $G$ contains a vertex $v$ of degree less than $t(G)$, we can find a $t(G)$-colouring of $G-v$ and extend it to a $t(G)$-colouring of $G$ by giving $v$ some colour not appearing in its neighbourhood.  Furthermore, it is well known that we can efficiently decompose and combine colourings on clique cutsets (see e.g.\ \cite{kingthesis}, $\S$3.4.3).  Thus to prove that a minimum counterexample must be robust, we only need to show that a minimum counterexample cannot contain a nonlinear homogeneous pair of cliques.  This is implied by two more known results:

\begin{lemma}[\cite{kingr08}]\label{lem:hpocreduce}
Let $G$ be a quasi-line graph on $n$ vertices containing a nonlinear homogeneous pair of cliques $(A,B)$.  In $O(n^{5/2})$ time we can find a proper quasi-line subgraph $G'$ of $G$ such that $\chi(G)=\chi(G')$, and given a $k$-colouring of $G'$ we can find a $k$-colouring of $G'$ in $O(n^{5/2})$ time.
\end{lemma}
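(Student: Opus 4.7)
The plan is to identify a pair of non-adjacent vertices $\{a,b\}$ with $a\in A$, $b\in B$ whose removal yields a proper induced subgraph $G'$ of the same chromatic number, and to handle both locating this pair and extending a colouring via bipartite matching. Since induced subgraphs of quasi-line graphs are quasi-line, the non-trivial content is the equality $\chi(G)=\chi(G')$ and the $O(n^{5/2})$ running time.

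Step 1 (structural setup). First I would analyse $G[A\cup B]$. Because $G$ is quasi-line, for each $v\in A$ the set $N(v)$ is covered by two cliques, and the homogeneous-pair property — every outside vertex sees all or none of $A$, and all or none of $B$ — then forces a linear ordering $A=\{a_1,\dots,a_p\}$, $B=\{b_1,\dots,b_q\}$ under which each $N_B(a_i)$ and each $N_A(b_j)$ is an interval. The nonlinear hypothesis, namely the existence of an induced $C_4$ in $G[A\cup B]$, is precisely the assertion that these interval families are not totally nested, so there exist indices $i<i'$, $j<j'$ giving a true crossing: $a_i b_{j'}, a_{i'}b_j \in E$ while $a_i b_j, a_{i'}b_{j'}\notin E$.

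Step 2 (reduction and chromatic equality). Using the crossing from Step 1 I would pick $a\in A$, $b\in B$ with $ab\notin E$ together with dominating partners $a^*\in A$, $b^*\in B$ satisfying $N_B(a^*)\supseteq N_B(a)\cup\{b\}$ and $N_A(b^*)\supseteq N_A(b)\cup\{a\}$; the crossing guarantees this. Set $G'=G-\{a,b\}$, a proper induced quasi-line subgraph with $\chi(G')\le \chi(G)$. For the reverse inequality, given a $k$-colouring of $G'$ I would assign $a$ and $b$ a common colour (they are non-adjacent); by the dominance relations, every neighbour of $a$ in $G'$ is a neighbour of $a^*$, so the colour of $a^*$ is safe at $a$, and symmetrically at $b$. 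A short exchange argument on the colour classes meeting $A\cup B$ then produces a single colour safe for both simultaneously, extending the colouring to $G$ without exceeding $k$.

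Step 3 (complexity and main obstacle). The interval orderings of Step 1 are computable in $O(n^2)$ time directly from the adjacency matrix; locating the reducible quadruple $(a,b,a^*,b^*)$ reduces to a Hopcroft--Karp-style matching search on the bipartite complement between $A$ and $B$, with $m=O(n^2)$ edges, and runs in $O(m\sqrt n)=O(n^{5/2})$ time, which dominates. Extending a $k$-colouring of $G'$ to $G$ is then a trivial list-completion on $\{a,b\}$ in $O(n)$. The main obstacle is the chromatic-equality argument in Step 2: arbitrary removal of two non-adjacent vertices need not preserve $\chi$, so the nonlinear $C_4$ must be parlayed into a genuine dominance pair $(a,a^*), (b,b^*)$, and a Hall-type count on the colour classes intersecting $A\cup B$ is needed to guarantee that a common free colour always exists for $a$ and $b$ no matter which $k$-colouring of $G'$ is presented.
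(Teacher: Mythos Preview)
This lemma is imported from \cite{kingr08} and is not proved in the present paper, so there is no in-paper argument to compare against; nevertheless your proposal has a genuine gap at Step~2.

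Your key claim is that the induced $C_4$ (the ``crossing'' $a_ib_{j'},a_{i'}b_j\in E$, $a_ib_j,a_{i'}b_{j'}\notin E$) yields dominating partners $a^*,b^*$ with $N_B(a^*)\supseteq N_B(a)\cup\{b\}$ and $N_A(b^*)\supseteq N_A(b)\cup\{a\}$. But the crossing certifies exactly the opposite: $b_{j'}\in N_B(a_i)\setminus N_B(a_{i'})$ while $b_j\in N_B(a_{i'})\setminus N_B(a_i)$, so these $B$-neighbourhoods are \emph{incomparable}, not nested. In the minimal nonlinear instance $|A|=|B|=2$ with $G[A\cup B]\cong C_4$ no dominating partner exists at all. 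Worse, even when a dominator $a^*$ does exist, the step ``the colour of $a^*$ is safe at $a$'' is false on its face, since $a$ and $a^*$ both lie in the clique $A$ and are adjacent. A concrete failure of the whole scheme: let $A=\{a_1,a_2\}$, $B=\{b_1,b_2\}$ with cross-edges $a_1b_2,a_2b_1$ only, and add one vertex $v$ complete to $A\cup B$. This is quasi-line with a nonlinear pair $(A,B)$ and $\chi(G)=3$, yet deleting either non-adjacent pair $\{a_1,b_1\}$ or $\{a_2,b_2\}$ leaves a three-vertex path with $\chi=2$. So vertex deletion cannot produce the required $G'$.

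The reduction actually used in \cite{kingr08} deletes \emph{edges} inside $A\cup B$, not vertices. The governing invariant is the maximum matching in the bipartite complement between $A$ and $B$, which equals the maximum number of colours any proper colouring can place in both cliques. One computes this matching by Hopcroft--Karp in $O(n^{5/2})$ time and replaces $G[A\cup B]$ by a linear-interval subgraph on the same vertex set with the same $|A|$, $|B|$, and matching number; any $k$-colouring of the resulting $G'$ then lifts back to $G$ by recolouring $A\cup B$ along the matching, again in $O(n^{5/2})$. Your intuition that Hopcroft--Karp and a bipartite invariant drive the bound is right, but the matching lives on the $A$--$B$ non-edges and the reduction is edge removal.
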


\begin{lemma}[\cite{kingr08}]\label{lem:hpocfind}
Given a quasi-line graph $G$, in $O(n^2m)$ time we can find a nonlinear homogeneous pair of cliques in $G$ or determine that none exists.
\end{lemma}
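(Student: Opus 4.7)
The plan is to enumerate small witness structures that any nonlinear homogeneous pair of cliques must contain, and then, for each witness, greedily construct the rest of the pair. The key observation is that if $(A,B)$ is such a pair, then $G[A\cup B]$ contains an induced $C_4$; since $A$ and $B$ are cliques, this $C_4$ must have the form $a a' b b'$ with $\{a,a'\}\subseteq A$ forming an edge, $\{b,b'\}\subseteq B$ forming an edge, crossing edges $a'b$ and $ab'$, and crossing non-edges $ab$ and $a'b'$. In particular the non-edge $\{a,b\}$ has the non-edge $\{a',b'\}$ somewhere in its common neighborhood.

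I would therefore iterate over the $O(n^2)$ non-edges $\{u,v\}$ of $G$. For each $\{u,v\}$, in $O(m)$ time I would search $N(u)\cap N(v)$ for a non-edge $\{a',b'\}$; if none exists then $\{u,v\}$ cannot be the ``diagonal'' of any seed $C_4$ of the required type. Otherwise $\{u,a',v,b'\}$, with $u,a'$ destined for $A$ and $v,b'$ destined for $B$, is a candidate seed, and I would try to extend it to a full pair $(A,B)$ by classifying every remaining vertex $w$ according to its adjacency pattern to the four seed vertices. Several patterns give forced placements: a vertex adjacent to exactly one of $\{u,a'\}$ cannot lie in $A$ (which is a clique containing both) and cannot lie outside $A\cup B$ (since outside vertices must see all or none of $A$), hence must lie in $B$, and must therefore be adjacent to both $v$ and $b'$, a consistency check whose failure aborts the seed; symmetric reasoning applies to $\{v,b'\}$. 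Vertices adjacent to all four or none of the seed vertices are initially flexible and are placed via partition refinement.

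Each time a vertex is newly placed in $A$ or $B$, the homogeneity requirement on the remaining tentatively-outside vertices tightens, possibly forcing further placements or consistency failures; this propagation is monotone (a vertex never leaves a category once placed) and can be organized to run in $O(n+m)$ time per seed. Once the refinement stabilizes, I would verify that $A$ and $B$ are disjoint nonempty cliques with $|A\cup B|\ge 3$, that every outside vertex sees all or none of $A$ and all or none of $B$, and that the seed $C_4$ remains induced and thus certifies that the pair is nonlinear. Summing over the $O(n^2)$ candidate non-edges at $O(m)$ work each gives the claimed $O(n^2m)$ total time.

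The main obstacle is the correctness of the greedy refinement: I need to show that whenever some valid nonlinear homogeneous pair of cliques extends a seed, the greedy procedure produces \emph{some} such valid pair without backtracking. This reduces to two observations. First, forced placements are monotone, so the algorithm cannot paint itself into a corner by placing a vertex prematurely. Second, among all homogeneous pairs of cliques extending a given seed there is a well-defined canonical one (for example, the one minimizing the set of outside vertices), and the quasi-line hypothesis---each neighborhood is covered by two cliques---constrains the local adjacency structure enough to rule out the pathological branchings that could otherwise appear in general graphs, so the greedy extension converges to this canonical pair.
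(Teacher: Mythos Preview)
The paper does not prove this lemma; it is quoted from \cite{kingr08} and used as a black box, so there is no in-paper argument to compare your proposal against.

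That said, your proposal has a genuine gap in the seed-enumeration step. You iterate over non-edges $\{u,v\}$ and, for each, pick \emph{one} non-edge $\{a',b'\}\subseteq N(u)\cap N(v)$ to form the seed $C_4$. But $N(u)\cap N(v)$ may contain many non-edges, and nothing guarantees that the one you happen to pick is extendable even when another choice would be. Concretely, suppose $(A,B)$ is a valid nonlinear pair with $u\in A$, $v\in B$, $uv\notin E$. Any vertex of $N(u)\cap N(v)$ lying outside $A\cup B$ is, by homogeneity, complete to $A\cup B$; two such outside vertices $c,d$ may well be non-adjacent. If your search returns $\{c,d\}$, your seed puts $c$ with $u$ and $d$ with $v$, and now any $w\in A\setminus\{u\}$ non-adjacent to $v$ is adjacent to $d$ but not to $v$, so it is neither homogeneous to $\{v,d\}$ nor placeable in the $B$-side clique --- the refinement aborts. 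Since you try only one seed per non-edge, you can miss $(A,B)$ entirely. Repairing this by enumerating all non-edges inside each $N(u)\cap N(v)$ gives up to $\Theta(n^4)$ seeds and overshoots the $O(n^2m)$ budget.

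The refinement step is also not yet an argument. You assert that forced placements are monotone and that the quasi-line hypothesis ``rules out pathological branching,'' but you never say what the algorithm does with a vertex that is complete to the current $A\cup B$ (hence unforced) yet cannot be safely left outside, nor why the side chosen for such a vertex cannot matter later. Appealing to a ``canonical'' pair minimizing the outside set does not help without a proof that the greedy process actually reaches it.
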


%%%%%%%%%%%%%%%%%%%%%%%%%%%%%%%%%%%%%%%%%%%%%%%%%%%%%%%%%%%%%%%%%%%%%%%%%%%%%%%%
%%%%%%%%%%%%%%%%%%%%%%%%%%%%%%%%%%%%%%%%%%%%%%%%%%%%%%%%%%%%%%%%%%%%%%%%%%%%%%%%
%%%%%%%%%%%%%%%%%%%%%%%%%%%%%%%%%%%%%%%%%%%%%%%%%%%%%%%%%%%%%%%%%%%%%%%%%%%%%%%%
%%%%%%%%%%%%%%%%%%%%%%%%%%%%%%%%%%%%%%%%%%%%%%%%%%%%%%%%%%%%%%%%%%%%%%%%%%%%%%%%
\section{Finding a good fractional colouring}\label{sec:overlap}

Let $G$ be a robust composition of canonical linear interval strips $\{(S_e,X_e,Y_e) \mid e\in E(H)\}$.  We now prove Lemma \ref{lem:a} by constructing a fractional $t'(G)$-colouring of $G$ with integer overlaps, i.e.\ with $w_e$ integral for all $e\in E(H)$. To find this fractional colouring we take a fractional $\chi_f(G)$-colouring and modify it on the cliques $X_e$.  We can assume that for all $e\in E(G)$, $|X_e|\leq |Y_e|$: If $|X_e|>|Y_e|$, simply change the direction of $e$ in $H$ and swap the names of $X_e$ and $Y_e$.

%For any composition of linear interval strips we can assume that each strip $S_e$ is either a trivial singleton or has $X_e$ and $Y_e$ disjoint.  For if this is not the case for some $S_e$, we can take $v\in X_e\cap Y_e$ and replace $S_e$ with two new strips:  $(S_e\setminus \{v\}, X_e\setminus \{v\}, Y_e\setminus \{v\})$ and $(\{v\},\{v\},\{v\})$, each corresponding to a separate copy of $e$ in $H$.

Our first step in the proof of Lemma \ref{lem:a} is to prove that we don't need to modify too many cliques $X_e$ with edges between them.  For a trivial edge $e$, $w_e$ is always equal to $|X_e|$ and is therefore an integer.  For $v\in V(H)$, let $D(v)$ denote the number of nontrivial edges out of $v$.  Let $D(H)$ be the maximum of $D(v)$ over all $v\in V(H)$.  We bound $D(H)$ using the following result:

\begin{lemma}\label{lem:boundd}
If $e$ is a nontrivial edge of $H$, then $|X_e|\geq 3 \sqrt{\chi_f(v)}$.
\end{lemma}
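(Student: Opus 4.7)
The plan is to apply the minimum degree condition from robustness to the vertex $v_{k+1}$ lying just past $X_e$ in the linear interval ordering of $S_e$, where $k=|X_e|$. Because $e$ is a nontrivial canonical strip, $X_e$ and $Y_e$ are disjoint and non-empty, so $|S_e|\geq k+1$ and $v_{k+1}$ exists; since $S_e$ is connected, $v_{k+1}$ is adjacent to $v_k$. Using the proper interval property of $S_e$, the closed intervals $[v_l,v_{k+1}]$ and $[v_{k+1},v_{r'}]$ that bracket the closed neighbourhood of $v_{k+1}$ in $S_e$ are cliques, each of size at most $\omega(G)$; in particular $l\geq k+2-\omega(G)$ and $r'\leq k+\omega(G)$.

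Next I would compute $\deg_G(v_{k+1})$ and split on whether $v_{k+1}\in Y_e$. In the first sub-case $v_{k+1}\in Y_e$ (which forces $S_e=X_e\cup Y_e$), the vertex picks up $C_{y_e}\setminus Y_e$ as additional neighbours, yielding $\deg_G(v_{k+1})=k-l+|C_{y_e}|$; combining this with $|C_{y_e}|\leq\omega(G)$ and robustness $\deg_G(v_{k+1})\geq t(G)$ gives $k\geq t(G)+l-\omega(G)$. In the second sub-case $v_{k+1}\notin Y_e$, the vertex has no neighbours outside $S_e$, so robustness becomes $r'-l\geq t(G)$; the clique bounds on $[v_l,v_{k+1}]$ and $[v_{k+1},v_{r'}]$ then force either $k\geq t(G)-\omega(G)+1$ (when the left clique is the binding constraint) or $k\geq\omega(G)-1$ (when the right clique is the binding constraint, which in turn forces $\omega(G)\geq (t(G)+2)/2$).

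In each branch, substituting $t(G)\geq\chi_f(G)+3\sqrt{\chi_f(G)}-1$ together with $\omega(G)\leq\chi_f(G)$ yields the required bound $|X_e|\geq 3\sqrt{\chi_f(G)}$. The main obstacle is the case split: the shape of $\deg_G(v_{k+1})$ depends on whether $v_{k+1}$ lies in the end-clique $Y_e$, and the clique bounds enter the inequality in different ways in the two branches. The unifying insight is that in both cases robustness forces a sum of two pieces (a piece of the strip and a piece of the hub clique, or two pieces of the strip) to be at least $t(G)+O(1)$ while each piece is capped by $\omega(G)$, so the gap between $t(G)$ and $\omega(G)$ — which is at least $3\sqrt{\chi_f(G)}$ — must be made up by making $|X_e|$ at least of order $\sqrt{\chi_f(G)}$.
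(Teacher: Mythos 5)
Your proposal takes essentially the same approach as the paper: consider the vertex $v_{k+1}$ immediately to the right of $X_e$ in the linear order of $S_e$, invoke robustness to get $\deg_G(v_{k+1})\geq t(G)\geq \omega(G)+3\sqrt{\chi_f(G)}-1$, and bound the closed neighbourhood of $v_{k+1}$ by $|X_e|$ plus a clique of size at most $\omega(G)$. The paper packages this as a single observation that the closed neighbourhood of $v_{k+1}$ outside $X_e$ is contained in one clique in both sub-cases (the right interval clique when $v_{k+1}\notin Y_e$; the hub clique $C_{y_e}$ when $v_{k+1}\in Y_e$), which gives $\deg_G(v_{k+1})+1\leq |X_e|+\omega(G)$ and then the bound in one line. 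Your second sub-case introduces an unnecessary split on which of the constraints $l\geq 1$ and $l\geq k+2-\omega(G)$ is binding. That split does close up (in the ``right clique binding'' branch, the derived inequality $\omega(G)\geq (t(G)+2)/2$ combined with $\omega(G)\leq\chi_f(G)$ forces $\chi_f(G)$ to be large enough that $k\geq\omega(G)-1\geq t(G)/2\geq 3\sqrt{\chi_f(G)}$ still holds), but it is a detour: since $l\geq 1$ always and $r'\leq k+\omega(G)$ from the right interval clique, you get $t(G)\leq r'-l\leq k+\omega(G)-1$, hence $k\geq t(G)-\omega(G)+1\geq 3\sqrt{\chi_f(G)}$ directly, with no need to consider the left clique at all. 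The left interval clique bound $l\geq k+2-\omega(G)$ you derived is never actually needed.
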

\begin{proof}
Since $G$ is robust, every vertex in $G$ must have degree at least $\omega(G)+3 \sqrt{\chi_f(G)}-1 \leq t(G)$.

Let the vertices of $S_e$ be $\{u_1,\ldots,u_{V(S_e)}\}$ in linear order, such that $X_e = \{u_1,\ldots,u_{|X_e|}\}$.  Because $e$ is nontrivial and canonical, the vertex $u_{|X_e|+1}$ exists.  By the structure of linear interval graphs, its closed neighbourhood outside $X_e$ is a clique if $u_{|X_e|+1}$ is not in $Y_e$.  If $u_{|X_e|+1}\in Y_e$ then its neighbourhood outside $X_e$ is contained in the hub clique containing $Y_e$.  Therefore $\omega(G)+|X_e| \geq d(u_{|X_e|+1})+1 \geq \omega(G)+ 3 \sqrt{\chi_f(G)}$.
\end{proof}

This immediately gives us a bound on $D(H)$, because every hub clique $C_v$ has size at most $\omega(G)\leq \chi_f(G)$.

\begin{corollary}
$D(H) \leq \frac 1 3\sqrt{\omega(G)} = t'(G)-\chi_f(G)$.
\end{corollary}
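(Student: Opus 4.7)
The plan is to apply Lemma \ref{lem:boundd} locally at each vertex $v\in V(H)$ and then count inside the hub clique $C_v$. Recall that by construction $C_v$ contains $X_e$ for every edge $e$ out of $v$ (and $Y_e$ for every edge into $v$), and that these contributing cliques come from pairwise vertex-disjoint strips.

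First, I would observe that because the strips in the composition are vertex-disjoint, the cliques $\{X_e : e \text{ is an edge out of } v\}$ are pairwise disjoint subsets of $C_v$. Since $C_v$ is a clique in $G$, we have $|C_v| \leq \omega(G)$, so restricting the sum to the $D(v)$ nontrivial edges leaving $v$ gives
$$\sum_{e \text{ nontrivial, out of } v} |X_e| \;\leq\; |C_v| \;\leq\; \omega(G).$$
Next, I would invoke Lemma \ref{lem:boundd}, which guarantees $|X_e| \geq 3\sqrt{\chi_f(G)}$ for every nontrivial edge $e$. Substituting this lower bound into the display yields $3\sqrt{\chi_f(G)}\cdot D(v) \leq \omega(G)$, i.e.\ $D(v) \leq \omega(G)/(3\sqrt{\chi_f(G)})$. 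Using $\omega(G)\leq\chi_f(G)$ to replace one factor of $\sqrt{\chi_f(G)}$ by $\sqrt{\omega(G)}$ in the denominator, this simplifies to $D(v) \leq \tfrac{1}{3}\sqrt{\omega(G)}$. Taking the maximum over $v\in V(H)$ gives the claimed bound, and the equality $\tfrac{1}{3}\sqrt{\omega(G)} = t'(G)-\chi_f(G)$ is immediate from the definition of $t'(G)$.

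No serious obstacle is expected; the corollary is essentially a pigeonhole argument on top of Lemma \ref{lem:boundd}. The one place worth double-checking is the disjointness of the $X_e$ inside $C_v$, which rests on the fact that the strips themselves are assembled as a disjoint union before the hub cliques are completed, rather than on any disjointness of the hub cliques across different $v$.
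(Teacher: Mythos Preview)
Your argument is correct and is essentially the paper's own: the paper simply remarks that the bound ``immediately'' follows because every hub clique $C_v$ has size at most $\omega(G)\leq\chi_f(G)$, and your proposal spells out exactly this pigeonhole count inside $C_v$ using the disjointness of the $X_e$ and Lemma~\ref{lem:boundd}.
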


We use our bound on $D(H)$ to prove the existence of the desired fractional $t'(G)$-colouring.

\begin{lemma}\label{lem:hedgecolouring}
We can colour the nontrivial edges of $H$ with $D(H)$ colours such that no two edges out of the same vertex get the same colour.
\end{lemma}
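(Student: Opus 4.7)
The plan is straightforward once we observe that the constraint in the lemma is one-sided: we only forbid two nontrivial edges sharing a tail from receiving the same colour, and impose no constraint at heads. Since each edge of $H$ has a well-defined tail (the underlying multigraph is loopless for robust compositions, as noted immediately after Theorem \ref{thm:structure}), the problem decouples completely across vertices.

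The procedure I would carry out is: for each vertex $v \in V(H)$ independently, take the set of nontrivial edges with tail $v$, which by definition has size $D(v)$, and assign its members distinct colours from $\{1, 2, \ldots, D(H)\}$ in any order. Because $D(v) \leq D(H)$ for every $v$, at most $D(H)$ colours are ever used. Because out-edges at distinct tails are disjoint as edge sets, these per-vertex assignments compose into a single well-defined colouring of all nontrivial edges, and by construction any two nontrivial edges sharing a tail receive distinct colours.

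There is no substantive obstacle here: the lemma is essentially a restatement of the definition of $D(H)$, recording the simple counting fact needed in Lemma \ref{lem:a}. The only detail worth pausing on is the loop-freeness of $H$, which guarantees that ``the tail of $e$'' is unambiguous and that an edge is not simultaneously out of two different vertices; this is inherited from the fact that $G$ is robust and hence contains no clique cutset.
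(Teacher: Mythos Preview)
Your proposal is correct and matches the paper's approach exactly: the paper's entire proof is the single sentence ``we simply construct an appropriate colouring by enumerating the nontrivial edges out of each vertex of $H$,'' which is precisely your per-vertex assignment. One small remark: since $H$ is directed, each edge already has a unique tail regardless of loops, so the loop-freeness observation, while true, is not actually needed for the argument.
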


To prove this, we simply construct an appropriate colouring by enumerating the nontrivial edges out of each vertex of $H$.  Having done this, we deal with one of these colour classes at a time:

\begin{lemma}\label{lem:overlap1}
Let $E_1$ be a set of nontrivial edges in $H$, no two of which go out of the same vertex.  Given a fractional $k$-colouring of $G$ with overlaps $\{ w_e \mid e \in E(H)\}$, there is a fractional $(k+1)$-colouring of $G$ with overlaps $\{w'_e \mid e \in E(H)\}$ such that $w'_e = \lfloor w_e\rfloor$ for $e\in E_1$, and $w'_e=w_e$ for $e\notin E_1$.
\end{lemma}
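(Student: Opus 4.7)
My plan is a ``redirect-then-repair'' modification of the given fractional $k$-colouring $w$, using at most one extra unit of colour weight shared across all of $E_1$. Write $f_e = w_e - \lfloor w_e\rfloor\in[0,1)$ for each $e\in E_1$. Since stable sets hitting both $X_e$ and $Y_e$ carry total weight $w_e\ge f_e$, I can (splitting individual weights if necessary) select such ``both'' sets whose chosen portions total exactly $f_e$. For each chosen $\epsilon$-weight portion coming from a set $S$ with $S\cap X_e=\{x\}$, I redirect it to $S\setminus\{x\}$: reduce $w(S)$ by $\epsilon$ and increase $w(S\setminus\{x\})$ by $\epsilon$. Because the strips are pairwise vertex-disjoint, $x\notin X_{e'}\cup Y_{e'}$ for every $e'\ne e$, so this swap preserves $S\cap X_{e'}$ and $S\cap Y_{e'}$ and leaves every overlap $w_{e'}$ with $e'\ne e$ untouched; locally it lowers $w_e$ by $\epsilon$, drops only $x$'s coverage (by $\epsilon$), and does not change the total weight.

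Carrying this out for every $e\in E_1$ brings each such $w_e$ down to $\lfloor w_e\rfloor$, leaves all $w_{e'}$ with $e'\notin E_1$ unchanged, and leaves a deficit $\delta^e_x\ge 0$ at each $x\in X_e$, $e\in E_1$, with $\sum_{x\in X_e}\delta^e_x = f_e$ and no other vertex undercovered. The structural observation enabling the repair is that, because the edges of $E_1$ have pairwise distinct sources, the hub cliques $\{C_{x_e}:e\in E_1\}$ are pairwise disjoint, and cross-strip adjacencies occur only inside a common hub clique; hence any ``transversal'' $T\subseteq\bigcup_{e\in E_1}X_e$ with $|T\cap X_e|\le 1$ for every $e$ is itself a stable set of $G$. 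So it suffices to write the deficit vector $(\delta^e_x)$ as a nonnegative combination $\sum_i w^{(i)}\mathbf{1}_{T^{(i)}}$ of such transversal indicators with $\sum_i w^{(i)}<1$ and add each $T^{(i)}$ to the colouring with weight $w^{(i)}$.

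For the decomposition I would use a staircase. Order $E_1=\{e_1,\dots,e_m\}$ with $f_{e_1}\ge\cdots\ge f_{e_m}$, and start by taking singleton-transversals $\{x\}$ of weight $\delta^{e_1}_x$ for each $x\in X_{e_1}$, giving total slot weight $f_{e_1}$. For each successive $e_j$, fractionally split the existing slots and greedily assign the split pieces to vertices of $X_{e_j}$ so that each $x\in X_{e_j}$ receives total assigned weight exactly $\delta^{e_j}_x$; this is always feasible because the current total slot weight $f_{e_1}$ is at least the required $\sum_x\delta^{e_j}_x = f_{e_j}$, so a standard fractional matching argument on the bipartite ``slot vs.\ vertex'' graph works. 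Extend each assigned piece by its chosen $X_{e_j}$-vertex. After processing $e_m$, the resulting transversals cover every deficit and have total weight $f_{e_1}=\max_e f_e<1$. Adding them yields a proper fractional colouring of total weight $<k+1$, which is the required fractional $(k+1)$-colouring. The only real work is the staircase construction; the stability of the transversals and the feasibility of the greedy assignment both reduce to the pairwise-disjointness of the hub cliques $C_{x_e}$.
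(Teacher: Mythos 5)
Your proof follows the same redirect-and-repair strategy as the paper: strip the $X_e$-vertex from the ``both''-type classes to lower $w_e$ to $\lfloor w_e\rfloor$, then refill the resulting deficits in the pairwise nonadjacent cliques $\{X_e:e\in E_1\}$ with transversal stable sets of total weight less than one; your staircase construction correctly formalizes the step the paper only sketches (``inserting stable sets of size $\leq |E_1|$, with total weight less than 1''). One small imprecision: the hub cliques $\{C_{x_e}:e\in E_1\}$ need not be pairwise disjoint (a trivial strip between two sources puts its lone vertex in both), but what your argument actually uses --- that each vertex of a nontrivial $X_e$ lies in no hub clique other than $C_{x_e}$, so no edges run between distinct $X_e$'s --- is correct.
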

\begin{proof}
Take some single edge $e$ of $E_1$ and an optimal fractional colouring of $G$.  We claim that we can make $w'_e$ an integer by adding $w_e-\lfloor w_e \rfloor$ extra weight to the fractional colouring.  To do this, take a collection of stable sets, each intersecting both $X_e$ and $Y_e$, of total weight $w_e-\lfloor w_e \rfloor$ in the colouring (it may be necessary to split one stable set into two identical stable sets of lesser weight to do this).  Now remove the vertex in $X_e$ from each of these stable sets, and fill the missing weight in $X_e$ (i.e.\ $w_e-\lfloor w_e \rfloor$) with singleton stable sets.  This gives us the desired fractional colouring in which $w'_e$ is an integer.  Note that we did not change the colouring outside $X_e$, so every other overlap is unchanged.

To see that we can ensure that every overlap in  $\{w_e \mid e\in E_1\}$ is an integer using extra weight less than 1, note that $\{X_e \mid e\in E_1\}$ is a set of disjoint cliques with no edges between them.  Thus instead of making one $w_e$ an integer by inserting singleton stable sets, we can make every $\{w_e \mid e\in E_1\}$ an integer by inserting stable sets of size $\leq |E_1|$, with total weight less than 1.
\end{proof}

This gives us an easy proof of Lemma \ref{lem:a}:

\begin{proof}[Proof of Lemma \ref{lem:a}]
Begin with a fractional $\chi_f(G)$-colouring of $G$ and a colouring of the edges of $H$ guaranteed by Lemma \ref{lem:hedgecolouring}.  For each matching in this edge colouring, apply Lemma \ref{lem:overlap1}.  Since $\chi_f(G)+D(H) \leq t'(G)$, the result is a fractional $t'(G)$-colouring of $G$ for which each overlap $w_e$ is the round-down of the original overlap.
\end{proof}

%%%%%%%%%%%%%%%%%%%%%%%%%%%%%%%%%%%%%%%%%%%%%%%%%%%%%%%%%%%%%%%%%%%%%%%%%%%%%%%%
%%%%%%%%%%%%%%%%%%%%%%%%%%%%%%%%%%%%%%%%%%%%%%%%%%%%%%%%%%%%%%%%%%%%%%%%%%%%%%%%
%%%%%%%%%%%%%%%%%%%%%%%%%%%%%%%%%%%%%%%%%%%%%%%%%%%%%%%%%%%%%%%%%%%%%%%%%%%%%%%%
%%%%%%%%%%%%%%%%%%%%%%%%%%%%%%%%%%%%%%%%%%%%%%%%%%%%%%%%%%%%%%%%%%%%%%%%%%%%%%%%
\section{Fractional and integer colourings of linear interval strips}\label{sec:emulating}

We now prove Lemma \ref{lem:b}, which tells us that we can emulate fractional colourings of linear interval strips using integer colourings.

\begin{proof}[Proof of Lemma \ref{lem:b}]
Consider a fractional $k$-colouring in which the total weight of colour classes intersecting both $X_e$ and $Y_e$ is an integer $r$.  Let $D$ be the set of colours appearing in both $X_e$ and $Y_e$, let $A$ be the set appearing in $X_e$ but not $Y_e$, $B$ the set appearing in $Y_e$ but not $X_e$, and $C$ the set appearing in neither $X_e$ nor $Y_e$.  For any set $T$ of colours, let $wt(T)$ denote the total weight of the colours in $T$.

Observe that since $k$, $r$, $|X_e|$ and $|Y_e|$ are integers, all of $wt(D)=r$, $wt(A) = |X_e|-r$, $wt(B)= |Y_e|-r$ and $wt(C) = k-|X_e|-|Y_e|+r$ are integers.  We construct a circular interval graph $F_e$ based on the fractional colouring of $S_e$.

Say the $n$ vertices of $S_e$ are $v_1, \ldots, v_n$, left-to-right.  To construct $F_e$ from $S_e$, we first add 
cliques $V_A$, $V_C$, and $V_B$, in order from left to right, to the right of $v_n$.  These cliques have size $wt(A)$, $wt(C)$, and $wt(B)$ respectively.  We then add edges to make three new maximal cliques:
\begin{eqnarray*}
I_X &:=& Y_e\cup V_A \cup V_C\\
I_C &:=& V_A\cup V_C \cup V_B\\
I_Y &:=& V_C\cup V_B \cup X_e
\end{eqnarray*}
Thus $V_A$ is complete to $Y_e$ and $V_C$, $V_C$ is complete to $V_A$, $V_B$, $X_e$, and $Y_e$, and $V_B$ is complete to $V_C$ and $X_e$.  Since $I_X$ and $I_Y$ define $k$-cliques and $I_C$ defines a $k-r$ clique, it is easy to see that $F_e$ is a circular interval graph with clique number $k$, hence both the fractional chromatic number and the chromatic number of $F_e$ are at least $k$.

Now we construct a fractional $k$-colouring of $F_e$.  On the vertices belonging to $S_e$, we keep our initial fractional colouring.  We can then cover $V_A$ with the colours in $A$, $V_C$ with the colours in $C$, and $V_B$ with the colours in $B$.  Hence $\chi_f(F_e) = k$ and so by Lemma \ref{lem:nk} we know that $\chi(F_e)=k$.  So consider an integer $k$-colouring of $F_e$.  We claim that on $S_e$ this is an integer $k$-colouring with exactly $r$ colours that appear in both $X_e$ and $Y_e$.

Suppose fewer than $r$ colours appear in both $X_e$ and $Y_e$.  Then there are more than $|X_e|+|Y_e|-r$ colours that cannot appear in $V_C$.  But $V_C$ is a clique of size $k-(|X_e|+|Y_e|-r)$, contradicting the fact that we have a proper $k$-colouring.  Now suppose more than $r$ colours appear in both $X_e$ and $Y_e$.  Then none of these colours can appear in $V_A \cup V_C \cup V_B$.  But $V_A \cup V_C \cup V_B$ is a clique of size $k-r$ so again we cannot have a proper $k$-colouring.  Thus exactly $r$ colours appear in both $X_e$ and $Y_e$.
\end{proof}

We note that the proof of Lemma \ref{lem:b} actually yields a polynomial-time algorithm which constructs a circular interval graph $F_e$ with chromatic number $k$, such that every optimal colouring of $F_e$ contains a colouring of $S_e$ in which $X_e$ and $Y_e$ have exactly $w_e$ colours in common, as desired.

%%%%%%%%%%%%%%%%%%%%%%%%%%%%%%%%%%%%%%%%%%%%%%%%%%%%%%%%%%%%%%%%%%%%%%%%%%%%%%%%
%%%%%%%%%%%%%%%%%%%%%%%%%%%%%%%%%%%%%%%%%%%%%%%%%%%%%%%%%%%%%%%%%%%%%%%%%%%%%%%%
%%%%%%%%%%%%%%%%%%%%%%%%%%%%%%%%%%%%%%%%%%%%%%%%%%%%%%%%%%%%%%%%%%%%%%%%%%%%%%%%
%%%%%%%%%%%%%%%%%%%%%%%%%%%%%%%%%%%%%%%%%%%%%%%%%%%%%%%%%%%%%%%%%%%%%%%%%%%%%%%%
\section{Completing the proof}\label{sec:proof}

We are now ready to complete the proof of Theorem \ref{thm:main} by proving Lemma \ref{lem:c}, which gives the desired bound on the chromatic number of a robust composition of linear interval strips as explained in Section \ref{sec:sketch}.

The key to the proof is the following:

\begin{lemma}\label{lem:newlemma}
Let $G$ be a robust composition of strips.  Given a fractional $t'(G)$-colouring $C$ of $G$ in which every overlap $w_e$ is an integer, we can construct a line graph $G' = G'(C)$ such that:
\begin{itemize}
\item[(i)] $G'$ has a fractional $t'(G)$-colouring, which implies that $\chi(G')\leq t(G)$, and
\item[(ii)] for any $k\geq t(G)$, any proper $k$-colouring of $G'$ yields a proper $k$-colouring of $G$ such that for each strip $(S_e, X_e,Y_e)$, exactly $w_e$ colours appear in both $X_e$ and $Y_e$.
\end{itemize}
\end{lemma}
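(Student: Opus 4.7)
\textbf{Plan for Lemma \ref{lem:newlemma}.} My plan is to use the construction of $G' = L(H')$ described in Section~\ref{sec:sketch}: for each edge $e \in E(H)$, the strip $(S_e, X_e, Y_e)$ is replaced by a clique $(S'_e, X'_e, Y'_e)$ with $|X'_e| = |X_e|$, $|Y'_e| = |Y_e|$, $|X'_e \cap Y'_e| = w_e$, and $S'_e = X'_e \cup Y'_e$. The hub cliques of $G'$ are in natural bijection with those of $G$ and have the same sizes. I then prove (i) and (ii) independently.

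For (i), the plan is to push $C$ forward to a fractional $t'(G)$-colouring $C'$ of $G'$ strip by strip. At each strip $e$, a stable set $S$ in $C$ meets $X_e$ in at most one vertex and $Y_e$ in at most one vertex (both are cliques), so we can classify $S$ at $e$ as \emph{shared}, \emph{$X$-only}, \emph{$Y$-only}, or \emph{untouched}. Since the total coverage weight on $X_e$ is $|X_e|$ and $w_e$ of that weight is shared with $Y_e$, the three non-trivial categories have total weights $w_e$, $|X_e|-w_e$, and $|Y_e|-w_e$, which exactly match the sizes of the blocks $X'_e\cap Y'_e$, $X'_e\setminus Y'_e$, and $Y'_e\setminus X'_e$ inside $S'_e$. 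Define $\Phi(S)$ by replacing $S\cap(X_e\cup Y_e)$ by a single vertex of $S'_e$ in the matching block; a fractional bin packing (splitting stable sets by weight where necessary) distributes the total weight evenly so that each vertex of $S'_e$ is covered by weight exactly $1$---integrality of $w_e$ is what makes the partition exact. Hub cliques are untouched, so their vertex-coverage invariant is inherited directly from $C$, and $\Phi(S)$ is easily seen to be stable in $G'$, giving $\chi_f(G') \leq t'(G)$. Applying Theorem~\ref{thm:ss} to the line graph $G'$ yields $\chi(G') \leq \chi_f(G') + \sqrt{\tfrac{9}{2}\chi_f(G')}$, and a short arithmetic check using $\omega(G) \leq \chi_f(G)$ shows this is at most $t(G)$ for all $\chi_f(G) \geq 1$.

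For (ii), any proper $k$-colouring of $G'$ with $k \geq t(G)$ assigns $|X_e|+|Y_e|-w_e$ distinct colours to the clique $S'_e$, exactly $w_e$ of which lie in $X'_e\cap Y'_e$ and so are shared between $X'_e$ and $Y'_e$. Identifying each $C'_v$ with $C_v$, this colouring transports to a proper $k$-colouring of $\bigcup_v C_v \subseteq V(G)$ in which each pair $(X_e, Y_e)$ shares exactly $w_e$ colours. To fill in each strip interior: restricting $C$ to $S_e$ gives a fractional $t'(G)$-colouring with overlap $w_e$, which after padding with $k - t'(G) > 0$ unused colours feeds into Lemma~\ref{lem:b} to produce an integer $k$-colouring of $S_e$ with overlap $w_e$. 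Now choose the bijections $X_e \leftrightarrow X'_e$ and $Y_e \leftrightarrow Y'_e$ so that each of the $w_e$ shared colour-pairs in this integer strip colouring is identified with a vertex of $X'_e \cap Y'_e$, then permute colour names within that strip's colouring to match the hub colouring on $X_e\cup Y_e$. Gluing across all strips yields the desired $k$-colouring of $G$.

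\textbf{Main obstacle.} The delicate step is the pushforward in (i): making the local merging of each shared stable set's $X_e$- and $Y_e$-vertices into a single vertex of $X'_e\cap Y'_e$ globally consistent across all strips, while ensuring every vertex of $G'$ is covered with weight exactly $1$. Fortunately the problem decouples into one bipartite fractional assignment per strip, each of which is solvable precisely because $w_e$ is an integer.
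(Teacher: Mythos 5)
Your construction of $G'$ and the pushforward argument for (i) are exactly the paper's: each block $X'_e\cap Y'_e$, $X'_e\setminus Y'_e$, $Y'_e\setminus X'_e$ has constant neighbourhood in $G'$, so the fractional weight in each category (which matches the block size precisely because $w_e\in\mathbb Z$) can be spread over the block, and the bound $\chi(G')\leq t(G)$ then follows from Theorem~\ref{thm:ss} by the same arithmetic. For (ii) you go a bit further than the paper: the paper's proof only transports the $G'$-colouring to $G_h$, relying on the fact that the complement of $G[X_e\cup Y_e]$ is bipartite with a matching of size $w_e$ (which it asserts without spelling out where the matching comes from), and leaves the extension to all of $G$ to Lemma~\ref{lem:b} and the Observation downstream; you fold those steps in, using Lemma~\ref{lem:b} to colour each strip interior and simultaneously to supply the matching of $w_e$ nonadjacent pairs that makes the transport to $X_e\cup Y_e$ proper. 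This is a sound and arguably cleaner way to close the loop, and it actually proves the statement about $G$ as literally written. Two small notational points: the ``padding with $k-t'(G)$ unused colours'' is unnecessary, since a fractional $t'(G)$-colouring is already a fractional $k$-colouring for any integer $k\geq t'(G)$ with the same overlaps; and your phrase ``identifying each $C'_v$ with $C_v$'' should be read as the carefully chosen bijection you construct two sentences later, not an arbitrary one, since an arbitrary identification could place a shared colour on an adjacent pair of $X_e\cup Y_e$.
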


Note that Lemma \ref{lem:c} follows immediately from (ii).  Before proving Lemma \ref{lem:newlemma}, we prove the claim that if $\chi_f(G')\leq t'(G)$, then $\chi(G')\leq t(G)$:

\begin{proof}[Proof of Claim]
The desired bound follows from Theorem \ref{thm:ss}:
\begin{eqnarray}
\chi(G') &\leq& \chi_f(G')+ \sqrt{\tfrac 92 \chi_f(G')}\\
&\leq& t'(G)+ \sqrt{\tfrac 92 t'(G)}\\
&\leq &\chi_f(G) + \tfrac 1 3\sqrt{\chi_f(G)} + \sqrt{\tfrac 92 \left(\chi_f(G) + \tfrac 13 \sqrt{\chi_f(G)}\ \right)}\\
&\leq &\chi_f(G) + \tfrac 1 3\sqrt{\chi_f(G)} + \sqrt{\tfrac 92 \left( \tfrac 4 3\chi_f(G) \right)}\\
&\leq &\chi_f(G) + 3\sqrt{\chi_f(G)}. 
\end{eqnarray}
Since $\chi(G')$ is an integer, this implies $\chi(G')\leq t(G)$.
\end{proof}

\begin{proof}[Proof of Lemma \ref{lem:newlemma}]
Suppose $G$ is a composition of strips $\{(S_e,X_e,Y_e)\mid e\in E(H)\}$ with underlying multigraph $H$.  We construct $G'$ as the composition of new strips $\{(S'_e,X'_e,Y'_e)\mid e\in E(H)\}$ with underlying multigraph $H$ as follows.  We replace every strip $(S_e,X_e,Y_e)$ with a new strip $(S'_e,X'_e,Y'_e)$ such that
\begin{itemize}
\item[-] $S'_e$ is a clique of size $|X_e|+|Y_e|-w_e$.
\item[-] $X'_e$ and $Y'_e$ have size $|X_e|$ and $|Y_e|$ respectively and cover the vertices of $S'_e$.  This implies that $X'_e\cap Y'_e$ has size $w_e$.
\end{itemize}
As explained at the end of Section \ref{sec:sketch}, $G'$ is a line graph (see Figure \ref{fig:contraction}).

We can see that $\chi_f(G') \leq t'(G)$: Given our fractional colouring $C$ of $G$, we simply cover each $X'_e$ (resp.\ $Y'_e$) with the colours appearing in $X_e$ (resp.\ $Y'_e$).  We can do this since $S'_e$ is a clique and $|X'_e\cap Y'_e|=w_e$.  And since every vertex in $X'_e\setminus Y'_e$ (resp.\ $Y'_e\setminus X'_e$, $X'_e\cap Y'_e$) has the same neighbourhood in $G'$, each resulting colour class is a stable set.

Now consider an integer colouring of $G'$ using $t(G)$ colours, which we know exists due to the previous claim.  Since $S'_e$ is a clique for every edge $e$ of $H$, the number of colours appearing in both $X'_e$ and $Y'_e$ is precisely $w_e$.  Note that in $G_h$, any vertex in $X_e$ (resp.\ $Y_e$) has the same neighbourhood outside $X_e\cup Y_e$.  Furthermore, $X_e\cup Y_e$ is the complement of a bipartite graph containing a matching of size $w_e$.  Therefore from our integer colouring of $G'$ we can construct an integer colouring of $G_h$ in which precisely $w_e$ colours appear in both $X_e$ and $Y_e$ for every $e\in E(H)$.
\end{proof}

%%%%%%%%%%%%%%%%%%%%%%%%%%%%%%%%%%%%%%%%%%%%%%%%%%%%%%%%%%%%%%%%%%%%%%%%%%%%%%%%
%%%%%%%%%%%%%%%%%%%%%%%%%%%%%%%%%%%%%%%%%%%%%%%%%%%%%%%%%%%%%%%%%%%%%%%%%%%%%%%%
%%%%%%%%%%%%%%%%%%%%%%%%%%%%%%%%%%%%%%%%%%%%%%%%%%%%%%%%%%%%%%%%%%%%%%%%%%%%%%%%
%%%%%%%%%%%%%%%%%%%%%%%%%%%%%%%%%%%%%%%%%%%%%%%%%%%%%%%%%%%%%%%%%%%%%%%%%%%%%%%%
\section{Algorithmic considerations}\label{sec:algorithmic}

We now prove Theorem \ref{thm:mainalgo}, which states that we can $t(G)$-colour a quasi-line graph $G$ in polynomial time.  We wish to reduce the problem to that of colouring robust compositions of linear interval strips.  To do so, we proceed as follows:

\begin{enumerate}
\item[(i)]If there is a vertex $v$ of degree less than $t(G)$, remove it, recursively $t(G)$-colour $G-v$, and give $v$ a colour not appearing in its neighbourhood.
\item[(ii)]If possible, find a nonlinear homogeneous pair of cliques $(A,B)$ using Lemma \ref{lem:hpocfind}, reduce it using Lemma \ref{lem:hpocreduce}, and use the colouring of the reduced graph to find a $t(G)$-colouring of $G$, again using Lemma \ref{lem:hpocreduce}.
\item[(iii)]If $G$ contains a clique cutset, we decompose $G$ on the clique cutset, colour the resulting graphs, and combine the colourings on the clique cutset, as described in Section 3.4.3 of \cite{kingthesis}.
\item[(iv)]If we reach this case then $G$ is robust; we now appeal to our algorithm for $t(G)$-colouring a robust quasi-line graph.
\end{enumerate}

To colour a robust composition of canonical linear interval strips $G$, we use the following steps:

\begin{enumerate}
\item Find a canonical linear interval strip decomposition for $G$, i.e.\ a multigraph $H$ and canonical linear interval strips $\{(S_e,X_e,Y_e) \mid e\in E(H)\}$.

\item Find an optimal fractional colouring of $G$, and compute $\lfloor w_e\rfloor$ for each nontrivial edge $e$ in $E(H)$.

\item Using the strip decomposition of $G$ and the values $\{\lfloor w_e\rfloor \mid e\in E(H)\}$, construct $G'$.

\item Construct an integer $t(G)$-colouring of $G'$, as guaranteed by Theorem \ref{thm:ss}.  From it, construct a $t(G)$ colouring of $G_h$ by finding a matching of size $\lfloor w_e \rfloor$ in the bipartite graph $\overline G [X_e\cup Y_e]$ for each nontrivial edge $e$.  This gives us a partial $t(G)$-colouring of $G$ in which every overlap $\lfloor w_e\rfloor$ is the same as in Step 2.
\item For each strip $(S_e,X_e,Y_e)$ in the decomposition of $G$, construct an integer $t(G)$-colouring of $S_e$ with the same overlap $\lfloor w_e\rfloor$ appearing in Step 2.
\item Combine these strip colourings with the colouring of $G_h$ to reach a proper $t(G)$-colouring of $G$.
\end{enumerate}

We can easily perform Steps 3 and 4 in polynomial time.  As we mentioned in Section \ref{sec:emulating}, to find the desired colouring of $S_e$ we must first construct an auxiliary circular interval graph $F_e$ as in the proof of Lemma \ref{lem:b}, then find an optimal colouring of it and restrict the colouring to the vertices of $S_e$.  It is easy to confirm that given $w_e$ we can construct $F_e$ in polynomial time; to find an optimal colouring of $F_e$ in polynomial time we appeal to Lemma \ref{lem:shihhsu}.  To complete the colouring at Step 6, we begin with our colouring of $G_h$.  Then for every nontrivial strip $S_e$, we take the colour classes in the colouring of $S_e$ and reassign them colours from the colouring of $G_h$ based on whether they intersect $X_e$, $Y_e$, both, or neither.  We can freely permute colours in $X_e$ and $Y_e$ because both $X_e$ and $Y_e$ are homogeneous cliques in $G_h$.  Clearly we can do this in polynomial time.  Thus it only remains for us to prove that we can perform Steps 1 and 2 in polynomial time.  We spend the remainder of the section on this task.

%%%%%%%%%%%%%%%%%%%%%%%%%%%%%%%%%%%%%%%%%%%%%%%%%%%%%%%%%%%%%%%%%%%%%%%%%%%%%%%%
%%%%%%%%%%%%%%%%%%%%%%%%%%%%%%%%%%%%%%%%%%%%%%%%%%%%%%%%%%%%%%%%%%%%%%%%%%%%%%%%
\subsection{Decomposing a composition of canonical linear interval strips}

Given a graph $G$ which is a composition of canonical linear interval strips, we must decompose $G$ into canonical linear interval strips and an underlying multigraph $H$.  Algorithms for doing this in polynomial time have been provided independently by Faenza, Oriolo, and Stauffer \cite{faenzaos11}, Hermelin, Mnich, van Leeuwen, and Woeginger \cite{hermelinmlw11}, and Chudnovsky and King \cite{chudnovskyk11}.  This implies that we can perform Step 1 of our algorithm in polynomial time.  It only remains to deal with Step 2.

%%%%%%%%%%%%%%%%%%%%%%%%%%%%%%%%%%%%%%%%%%%%%%%%%%%%%%%%%%%%%%%%%%%%%%%%%%%%%%%%
%%%%%%%%%%%%%%%%%%%%%%%%%%%%%%%%%%%%%%%%%%%%%%%%%%%%%%%%%%%%%%%%%%%%%%%%%%%%%%%%
%%%%%%%%%%%%%%%%%%%%%%%%%%%%%%%%%%%%%%%%%%%%%%%%%%%%%%%%%%%%%%%%%%%%%%%%%%%%%%%%
\subsection{Constructing the line graph}

To achieve Step 2, we must first find an optimal fractional colouring of $G$.  We then calculate each overlap $w_e$, which gives us each $\lfloor w_e \rfloor$ that we need to construct the line graph $G'$.

Minty \cite{minty80}, Nakamura and Tamura \cite{nakamurat01}, and recently Oriolo, Pietropaoli, and Stauffer \cite{oriolops08} give polynomial-time algorithms for finding a maximum-weight stable set in any claw-free graph.  By polynomial equivalence results of Gr\"otschel, Lov\'asz, and Schrijver (see \cite{grotschells81} \S 7), this implies that we can compute the fractional chromatic number of any claw-free graph in polynomial time.  We can deal with weights as well:  Given positive integer weights $\beta(v)$ for each vertex $v$ of the graph $G$, let $G_\beta$ be the graph obtained from $G$ by substituting a clique of size $\beta(v)$ for each vertex $v$ of $G$.  We can find $\chi_f(G_\beta)$ in polynomial time in terms of $|V(G)|+\log(\max_{v\in V(G)}\beta(v))$.

But in fact more is true.  The equivalence results in \cite{grotschells81} tell us that for claw-free graphs, we can efficiently optimize over the fractional clique polytope.  This is the dual feasible region of the linear program describing the fractional chromatic number.  Theorem 6.5.14 in \cite{grotschellsbook} states:

\begin{theorem}
There exists an oracle-polynomial time algorithm that, for any well-described polyhedron $(P; n, \phi)$ given by a strong separation oracle and for any $c\in \mathbb Q^n$, either
\begin{enumerate}
\item finds a basic optimum standard solution, or
\item asserts that the dual problem is unbounded or has no solution.
\end{enumerate}
\end{theorem}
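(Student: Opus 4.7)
The plan is to drive everything from the ellipsoid method. Since $P$ is well-described with facet complexity $\phi$, there is an a~priori ball $B(0,R)$ with $R = 2^{\mathrm{poly}(\phi,n)}$ that contains every vertex of $P$, together with a lower bound on the volume of $P \cap B(0,R)$ whenever $P$ is non-empty. I would use the strong separation oracle to run the ellipsoid algorithm starting from $B(0,R)$: at each iteration, if the centre $x_k$ lies in $P$ I record $c^{\top} x_k$ and slice with the ``sliding objective'' hyperplane $\{ x : c^{\top} x \ge c^{\top} x_k \}$; otherwise I slice with the separating hyperplane returned by the oracle. After $\mathrm{poly}(n, \phi, \langle c \rangle)$ iterations, either the enclosing ellipsoid has shrunk below the volume lower bound for $P$ (certifying that $P$ is empty or that the objective is unbounded, depending on which slab we are tracking) or I have a feasible point $\tilde x$ with $c^{\top} \tilde x$ within $\varepsilon = 2^{-\mathrm{poly}(\phi, n, \langle c \rangle)}$ of the supremum.

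Next I would convert the weak optimum $\tilde x$ into a basic optimum. Because vertex coordinates of $P$ are rationals with numerator and denominator of encoding length at most $\mathrm{poly}(\phi)$, once $\varepsilon$ is smaller than the minimum separation between distinct vertex coordinates, the simultaneous Diophantine approximation algorithm of Frank and Tardos rounds $\tilde x$ to a unique rational point $x^{\ast}$ of bounded denominator. A short purification phase then promotes $x^{\ast}$ to a vertex: probe the separation oracle to extract $n$ tight, linearly independent inequalities satisfied by $x^{\ast}$, solve the resulting square system exactly over $\mathbb{Q}$ by Gaussian elimination, and verify with the oracle that the resulting point is feasible with the same objective value as $\tilde x$.

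The main obstacle is controlling numerical accuracy end-to-end: I must check that the $\varepsilon$ demanded from the ellipsoid stage is only polynomially small in $\phi$, $n$, and $\langle c \rangle$, so that the total number of oracle calls stays polynomial, and that each intermediate system manipulated during purification still has facet complexity polynomial in $\phi$. To produce the dichotomy in the statement I would run the same machinery twice on auxiliary problems: a feasibility LP over $P$ (to detect $P = \emptyset$, i.e.\ the dual being unbounded) and a homogenized search in the recession cone of $P$ along direction $c$ (to detect primal unboundedness, i.e.\ the dual being infeasible). In each case the strong separation oracle for $P$ induces a strong separation oracle for the derived polyhedron, so the overall procedure remains oracle-polynomial and returns either a basic optimum standard solution or the required assertion about the dual.
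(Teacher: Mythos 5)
This statement is not proved in the paper at all---it is Theorem 6.5.14 of Gr\"otschel, Lov\'asz, and Schrijver's \emph{Geometric Algorithms and Combinatorial Optimization}, which the authors quote verbatim and invoke as a black box, explicitly referring the reader to that book for the underlying definitions. So there is no internal proof to compare your attempt against; you are reproving an external cited theorem rather than anything the paper establishes.

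With that caveat, your sketch does capture the broad shape of the GLS machinery: sliding-objective ellipsoid method to reach a weak optimum, simultaneous Diophantine approximation \`a la Frank--Tardos to round to an exact rational point, and purification to a vertex via tight constraints pulled from the separation oracle. Two places where it falls materially short of the theorem as stated. First, the volume lower bound you lean on in the ellipsoid phase exists only when $P$ is full-dimensional; GLS have to handle the lower-dimensional case with shallow cuts and an affine-hull-finding subroutine, and that is where much of the technical work lives. Second, and more important for this paper's use of the result, a \emph{basic optimum standard solution} in GLS is not merely an optimal vertex of $P$: it is a primal--dual pair in which the dual (``standard'') part is an explicit basic solution assembled from the inequalities the oracle returns. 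Your sketch produces only the primal vertex. The paper actually relies on the dual output---the fractional colouring is a solution to the dual of the LP over the fractional clique polytope, and the bound of at most $|V(G)|$ nonzero stable sets comes precisely from the basicness of that dual solution---so a proof that stops at a primal vertex would not deliver what the paper needs from this theorem.
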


We refer the reader to \cite{grotschellsbook} for the formal definitions of a {\em well-described polyhedron} and an {\em oracle-polynomial time algorithm}.  In our specific situation, this theorem states that since we can optimize over the fractional clique polytope, we can find a {\em basic solution} to the fractional chromatic number linear program in polynomial time for any claw-free graph.  A basic solution to this linear program is a fractional colouring in which the incidence vectors of the stable sets with nonzero weight are linearly independent.  Thus by the dimension of the stable set polytope, there are at most $|V(G)|$ nonzero stable sets in the fractional colouring.  Therefore not only can we construct an optimal fractional colouring of $G$ in polynomial time, but we can easily compute $w_e$ for each strip $(S_e,X_e,Y_e)$ in $G$.

This is the final piece of the algorithmic puzzle, and completes the proof of Theorem \ref{thm:mainalgo}.

%%%%%%%%%%%%%%%%%%%%%%%%%%%%%%%%%%%%%%%%%%%%%%%%%%%%%%%%%%%%%%%%%%%%%%%%%%%%%%%%
%%%%%%%%%%%%%%%%%%%%%%%%%%%%%%%%%%%%%%%%%%%%%%%%%%%%%%%%%%%%%%%%%%%%%%%%%%%%%%%%
%%%%%%%%%%%%%%%%%%%%%%%%%%%%%%%%%%%%%%%%%%%%%%%%%%%%%%%%%%%%%%%%%%%%%%%%%%%%%%%%
%%%%%%%%%%%%%%%%%%%%%%%%%%%%%%%%%%%%%%%%%%%%%%%%%%%%%%%%%%%%%%%%%%%%%%%%%%%%%%%%
\section{Conclusion}

We have proven that we can $(1+o(1))\chi_f$-colour any quasi-line graph in polynomial time.  While we cannot do this for all claw-free graphs, in a later paper we will prove that we can do it whenever the stability number is at least four.  The approach is essentially the same but involves a lot of case analysis.

We conjecture that an asymptotic approximation algorithm exists for claw-free graphs even when $\chi$ and $\chi_f$ do not agree asymptotically:

\begin{conjecture}
There is an algorithm that, given any claw-free graph $G$, returns a proper $(1+o(1))\chi(G)$-colouring of $G$ in polynomial time.
\end{conjecture}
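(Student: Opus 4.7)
The plan is to invoke the Chudnovsky--Seymour structure theorem for general claw-free graphs, which decomposes any claw-free $G$ into a short list of basic classes (quasi-line graphs, antiprismatic graphs, and thickenings of the icosahedron and of a few other small exceptional graphs) combined via a small set of structural operations (clique cutsets, homogeneous pairs, $1$- and $2$-joins, hex-joins, and $W$-joins). For each basic class and each decomposition we must construct a $(1+o(1))\chi$-colouring in polynomial time.

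First, I would dispose of the basic classes. Quasi-line graphs are handled by Theorem \ref{thm:mainalgo}, and, as the authors remark in the concluding section, an essentially analogous argument (with more case analysis) handles all claw-free graphs with $\alpha(G) \geq 4$. This leaves only the basic classes of stability number at most $3$: thickenings of $\overline{C_5}$, of the icosahedron, and of a handful of related small graphs. On these graphs the fractional chromatic number is a poor proxy for $\chi$, as the $\overline{C_5}$-example in the introduction already shows, but their structure is rigid enough that $\chi$ can be computed \emph{exactly} by a direct combinatorial formula on the thickening weights. I would prove such a formula for each exceptional class, yielding an exact polynomial-time colouring routine there.

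Second, I would process each decomposition operation in the style of Section \ref{sec:algorithmic}: identify the decomposition, colour the two pieces recursively, and stitch the colourings together across the shared boundary. Clique cutsets and $1$-joins are straightforward and the present paper already treats them. For homogeneous pairs and $W$-joins I would generalise Lemmas \ref{lem:hpocreduce} and \ref{lem:hpocfind}, replacing the homogeneous pair of \emph{cliques} by the more general homogeneous pair allowed by Chv\'atal--Sbihi, and using a local colour-matching argument in the style of the overlap argument for $w_e$ in Section \ref{sec:sketch}.

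The main obstacle is the interaction between the two regimes. When a decomposition glues an $\alpha \leq 3$ piece, on which our colouring is exact, to an $\alpha \geq 4$ piece, on which our colouring is only $(1+o(1))\chi$-approximate, the coarser colouring on the latter must be perturbed to agree with the tight colouring on the former along the shared boundary, and this perturbation must not inflate the additive error beyond $o(\chi(G))$. Guaranteeing this compatibility uniformly across every operation in the Chudnovsky--Seymour list, while also controlling how errors compound up the recursion tree, is the crux of the conjecture; I expect the bulk of the work to lie in a careful boundary-matching analysis for each structural operation, together with a global accounting argument that the total slack in the recursion remains sublinear in $\chi(G)$.
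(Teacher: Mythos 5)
This statement is posed in the paper as an \emph{open conjecture}; the paper offers no proof, so there is nothing to compare your attempt against. What you have written is a research programme, not a proof, and you say as much yourself when you identify the ``boundary-matching analysis'' and the ``global accounting argument'' as the crux and declare that you ``expect the bulk of the work to lie'' there. A proposal that names the central difficulty without resolving it leaves the conjecture exactly where the paper left it.

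Beyond the honest gap you flag, a few of the steps you treat as routine are not. The reduction for homogeneous pairs in Lemmas~\ref{lem:hpocreduce} and~\ref{lem:hpocfind} is specific to \emph{nonlinear homogeneous pairs of cliques} in quasi-line graphs, where $G[A\cup B]$ is cobipartite and the reduction preserves both $\chi$ and $\chi_f$; for the general homogeneous pairs and $W$-joins arising in the full claw-free structure theorem, $A$ and $B$ need not be cliques, the induced subgraph on $A\cup B$ can be far richer, and no analogous $\chi$-preserving reduction is known. Likewise, your assertion that $\chi$ admits an exact polynomial-time combinatorial formula ``on the thickening weights'' for every basic class of stability number at most three is a substantive claim requiring proof, not a given: thickenings of antiprismatic graphs form a large and structurally intricate family, and the $\overline{C_5}$ example the paper uses to show $\chi_f$ is a poor proxy is only the simplest member. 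Finally, the $\alpha\ge 4$ case you lean on is a result the authors only \emph{announce} for a later paper; invoking it is reasonable as a conditional reduction, but it does not make the present sketch a proof. In short, your plan is a plausible reading of how one might attack the conjecture in the spirit of the paper, but every genuinely hard step is deferred, and the conjecture remains open.
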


%%%%%%%%%%%%%%%%%%%%%%%%%%%%%%%%%%%%%%%%%%%%%%%%%%%%%%%%%%%%%%%%%%%%%%%%%%%%%%%%
%%%%%%%%%%%%%%%%%%%%%%%%%%%%%%%%%%%%%%%%%%%%%%%%%%%%%%%%%%%%%%%%%%%%%%%%%%%%%%%%
%%%%%%%%%%%%%%%%%%%%%%%%%%%%%%%%%%%%%%%%%%%%%%%%%%%%%%%%%%%%%%%%%%%%%%%%%%%%%%%%
%%%%%%%%%%%%%%%%%%%%%%%%%%%%%%%%%%%%%%%%%%%%%%%%%%%%%%%%%%%%%%%%%%%%%%%%%%%%%%%%
\section{Acknowledgements}

We would like to thank Bruce Shepherd for helpful conversations about the work in this paper, and Anna Galluccio for providing valuable feedback on an earlier version of this work.

%\bibliographystyle{plain}

%\bibliography{masterbib}

\begin{thebibliography}{10}

\bibitem{bellaregs98}
M.~Bellare, O.~Goldreich, and M.~Sudan.
\newblock Free bits, {PCP}s, and nonapproximability--towards tight results.
\newblock {\em SIAM Journal on Computing}, 27:804--915, 1998.

\bibitem{chudnovskyo08}
M.~Chudnovsky and A.~O. Fradkin.
\newblock Hadwiger's {C}onjecture for quasi-line graphs.
\newblock {\em J. Graph Theory}, 59:17--33, 2008.

\bibitem{chudnovskyk11}
M.~Chudnovsky and A.~D. King.
\newblock Canonical decompositions of quasi-line graphs, 2011.
\newblock In preparation.

\bibitem{chudnovskyo07}
M.~Chudnovsky and A.~{Ovetsky}.
\newblock Coloring quasi-line graphs.
\newblock {\em J. Graph Theory}, 54:41--50, 2007.

\bibitem{clawfree7}
M.~Chudnovsky and P.~Seymour.
\newblock Claw-free graphs {VII}. {Q}uasi-line graphs.
\newblock Submitted.

\bibitem{cssurvey}
M.~Chudnovsky and P.~Seymour.
\newblock The structure of claw-free graphs.
\newblock In B.~S. Webb, editor, {\em London Mathematical Society Lecture Note
  Series: Surveys in Combinatorics, 2005}. Cambridge University Press, 2005.

\bibitem{clawfree5}
M.~Chudnovsky and P.~Seymour.
\newblock Claw-free graphs {V}. {G}lobal structure.
\newblock {\em J. Comb. Theory Ser. B}, 98(6):1373--1410, 2008.

\bibitem{chvatals87}
V.~Chv{\'{a}}tal and N.~Sbihi.
\newblock Bull-free {B}erge graphs are perfect.
\newblock {\em Graphs and Combinatorics}, 3:127--139, 1987.

\bibitem{denghh96}
X.~Deng, P.~Hell, and J.~Huang.
\newblock Linear-time representation algorithms for proper circular-arc graphs
  and proper interval graphs.
\newblock {\em SIAM Journal on Computing}, 25:390--403, 1996.

\bibitem{faenzaos11}
Y.~Faenza, G.~Oriolo, and G.~Stauffer.
\newblock An algorithmic decomposition of claw-free graphs leading to an
  {$O(n^3)$}-algorithm for the weighted stable set problem.
\newblock {\em Proc. 21st ACM-SIAM Symposium on Discrete Algorithms}, 2011.

\bibitem{grotschells81}
M.~Gr{\"{o}}tschel, L.~Lov{\'a}sz, and A.~Schrijver.
\newblock The ellipsoid method and its consequences in combinatorial
  optimization.
\newblock {\em Combinatorica}, 1:169--197, 1981.

\bibitem{grotschellsbook}
M.~Gr{\"{o}}tschel, L.~Lov{\'a}sz, and A.~Schrijver.
\newblock {\em Geometric Algorithms and Combinatorial Optimization, 2nd ed.}
\newblock Springer, Berlin-Heidelberg, 1993.

\bibitem{hermelinmlw11}
D.~Hermelin, M.~Mnich, E.~J. van Leeuwen, and G.~J. Woeginger.
\newblock Domination when the stars are out.
\newblock {arXiv} preprint 1012.0012, 2011.

\bibitem{holyer81}
I.~Holyer.
\newblock The {NP}-completeness of edge-colouring.
\newblock {\em SIAM Journal on Computing}, 10:718--720, 1981.

\bibitem{kahn96}
J.~Kahn.
\newblock Asymptotics of the chromatic index for multigraphs.
\newblock {\em Journal of Combinatorial Theory, Series B}, 68(2):233 -- 254,
  1996.

\bibitem{kingthesis}
A.~D. King.
\newblock {\em Claw-free graphs and two conjectures on $\omega$, $\Delta$, and
  $\chi$}.
\newblock PhD thesis, McGill University, October 2009.

\bibitem{kingr08}
A.~D. King and B.~A. Reed.
\newblock Bounding {$\chi$} in terms of {$\omega$} and {$\Delta$} for
  quasi-line graphs.
\newblock {\em Journal of Graph Theory}, 59(3):215--228, 2008.

\bibitem{lovasz75}
L.~Lov\'{a}sz.
\newblock On the ratio of optimal integral and fractional covers.
\newblock {\em Discrete Mathematics}, 13:383--390, 1975.

\bibitem{minty80}
G.~J. Minty.
\newblock On maximal independent sets of vertices in claw-free graphs.
\newblock {\em Journal of Combinatorial Theory, Series B}, 28(3):284--304,
  1980.

\bibitem{nakamurat01}
D.~Nakamura and A.~Tamura.
\newblock A revision of {M}inty's algorithm for finding a maximum-weight stable
  set of a claw-free graph.
\newblock {\em J. Oper. Res. Soc. Japan}, 44:194--204, 2001.

\bibitem{niessenk00}
T.~Niessen and J.~Kind.
\newblock The round-up property of the fractional chromatic number for proper
  circular arc graphs.
\newblock {\em J. Graph Theory}, 33:256--267, 2000.

\bibitem{oriolops08}
G.~Oriolo, U.~Pietropaoli, and G.~Stauffer.
\newblock A new algorithm for the maximum weighted stable set problem in
  claw-free graphs.
\newblock In A.~Lodi, A.~Panconesi, and G.~Rinaldi, editors, {\em IPCO 2008},
  volume 5035 of {\em Lecture Notes in Computer Science}, pages 77--96.
  Springer, 2008.

\bibitem{reed01b}
B.~A. Reed.
\newblock A gentle introduction to semi-definite programming.
\newblock In J.~L. {Ram{\'{i}}rez Alfons{\'{i}}n} and B.~A. Reed, editors, {\em
  Perfect Graphs}, chapter~11. Wiley, 2001.

\bibitem{sanderss05}
P.~Sanders and D.~Steurer.
\newblock An asymptotic approximation scheme for multigraph edge coloring.
\newblock {\em Proc. 16th ACM-SIAM Symposium on Discrete Algorithms}, 2005.

\bibitem{shihh89}
W.-K. Shih and W.-L. Hsu.
\newblock An {$O(n^{3/2})$} algorithm to color proper circular arcs.
\newblock {\em Discrete Applied Mathematics}, 25:321--323, 1989.

\end{thebibliography}

\end{document}